\newtheorem{theorem}{Theorem}[]
\newtheorem{corollary}[theorem]{Corollary}
\newtheorem{remark1}[theorem]{Remark}
\DeclareMathOperator{\E}{\mathop{}\mathbb{E}}
\def\epsilon{\varepsilon}
\title{Calibration of P-values for calibration and for deviation
       of a subpopulation from the full population}
\author{Mark Tygert\\
        Fundamental Artificial Intelligence Research, Meta Platforms, Inc.\\
        786 Coleman Ave.\ Apt.\ L, Menlo Park, CA 94025-2440\\
        \url{http://tygert.com}\\
        {\tt mark@tygert.com}}
\begin{document}

\maketitle

\begin{abstract}
The author's recent research papers,
``Cumulative deviation of a subpopulation from the full population'' and
``A graphical method of cumulative differences between two subpopulations''
(both published in volume 8 of Springer's open-access {\it Journal of Big Data}
during 2021), propose graphical methods and summary statistics,
without extensively calibrating formal significance tests.
The summary metrics and methods can measure the calibration
of probabilistic predictions and can assess differences
in responses between a subpopulation and the full population while controlling
for a covariate or score via conditioning on it.
These recently published papers construct significance tests based
on the scalar summary statistics, but only sketch how to calibrate
the attained significance levels (also known as ``P-values'') for the tests.
The present article reviews and synthesizes work spanning many decades
in order to detail how to calibrate the P-values.
The present paper presents computationally efficient, easily implemented
numerical methods for evaluating properly calibrated P-values,
together with rigorous mathematical proofs guaranteeing their accuracy,
and illustrates and validates the methods with open-source software
and numerical examples.
\end{abstract}

\noindent {\it Keywords:} Brownian motion, significance, test, hypothesis,
                          numerical methods, graph

\section{Introduction}

Two basic problems in statistics are (1) checking calibration
of probabilistic predictions such that any event predicted
to happen, say, $x$ percent of the time actually occurs $x$ percent of the time
and (2) assessing the deviation of a subpopulation
from the full population while conditioning on a specified covariate or score
(``conditioning on'' is also known as ``controlling for,''
and involves comparing only individuals whose values for the covariate or score
are similar or otherwise match up).
In fact, the first problem can be viewed as a special case
of the second problem by requiring the expected response of the full population
to be equal to the predicted probability, so that the deviation
of the subpopulation from the full population is simply the deviation
from the probabilities. In all cases, the data consists of observations
of responses paired with scores (and weights, in the case of weighted samples).
In the first case, the scores are the predicted probabilities;
in the second case, the scores are the values of the specified covariate
(which could be probabilistic predictions, too).
In the social and biomedical sciences, controlling for income or age is common.

Recent work of~\citet{tygert_full} and~\citet{tygert_two} proposes metrics
for (inter alia) measuring miscalibration or deviation of a subpopulation
from the full population, reviewed in Subsection~\ref{tygertsummary} below.
The present paper develops methods for converting the values of such metrics
into properly calibrated attained significance levels
(also known as ``P-values''), deriving the cumulative distribution functions
for the metrics under the null hypothesis of no deviation between
the subpopulation and the full population (or of perfect calibration
in the underlying subpopulation). As reviewed below,
the works of~\citet{delgado}, \citet{diebolt}, and~\citet{stute}
prove that the estimates at finite sample sizes converge reasonably rapidly
to the limiting asymptotic values in most settings encountered in practice,
as confirmed in the numerical experiments presented below.
Figures~\ref{kuc} and~\ref{ksc} below illustrate the rapid convergence.
The metrics discussed in the present paper are very similar
to those of~\citet{kuiper} and of~\citet{kolmogorov} and~\citet{smirnov}.

The earlier works broke ties in the scores at random,
randomly ordering observations corresponding to exactly the same score.
Subsection~\ref{reduction} below proposes an alternative that avoids
any randomization (though randomization does simplify the analysis a bit).
Subsection~\ref{reduction}'s new approach may be of interest beyond just
for the calibration of P-values.

The present paper carefully elaborates on widely deployed prior work,
elucidating many details that earlier publications omitted.
The elaboration is for the convenience and reference of the reader;
the reader undoubtedly could derive most of the results presented below,
but is welcome to spare the extensive effort required by instead leveraging
the present paper and the associated open-source software.
The presentation below provides full proofs that earlier publications omitted,
and also summarizes everything required to solve the problems posed here,
rather than requiring the reader to traverse literature
that spans many decades and disciplines.
The present paper is a response to many requests for pulling together
everything into a comprehensive, convenient, reasonably elementary exposition,
as well as elaborating the simple approach of Subsection~\ref{reduction}
that not every (any?) end-user had realized was possible.
In particular, Subsection~\ref{tygertsummary} below briefly reviews
the cumulative methods of~\citet{tygert_full}
for assessing the deviation of a subpopulation from the full population;
readers unfamiliar with that approach may wish to start
with the full paper of~\citet{tygert_full} or the summary
in Subsection~\ref{tygertsummary} below.

The remainder of the paper has the following structure:
Section~\ref{methods} presents the main methods,
Section~\ref{results} validates and illustrates the methods
via numerical examples,\footnote{Software in Python 3 that implements
the methods and automatically reproduces the numerical results
(including the figures) is available
at \url{https://github.com/facebookresearch/cdeets}}
and Section~\ref{conclusion} briefly discusses
the results and draws conclusions.

\section{Methods}
\label{methods}

The present section details the methodology of the present paper.
Subsection~\ref{distributions} provides computationally efficient formulae
for evaluating the cumulative distribution functions
of the range and of the maximum absolute value of the standard Brownian motion
over the unit interval $[0, 1]$, based on the works of~\citet{feller}
and~\citet{darling-siegert}.
Subsection~\ref{tygertsummary} reviews the methods
of~\citet{tygert_full} for assessing deviation of a subpopulation
from the full population and for assessing calibration
of probabilistic predictions, introducing a graphical method
along with two statistics which summarize the graph as scalars.
Next, Subsection~\ref{calibration} shows how to use
the numerical methods of Subsection~\ref{distributions}
to calculate attained significance levels (P-values)
for the scalar summary statistics introduced in Subsection~\ref{tygertsummary},
based on the works of~\citet{delgado}, \citet{diebolt}, and~\citet{stute}.
Finally, Subsection~\ref{reduction} explains an alternative to breaking ties
in the covariates or scores at random
(randomization does simplify the analysis slightly,
but avoiding randomization altogether is possible, too).
Readers unfamiliar with the work of~\citet{tygert_full} or~\citet{tygert_two}
might want to skip to Subsection~\ref{tygertsummary} at first; however,
readers interested mainly in the numerical methods might want to start instead
with Subsection~\ref{distributions}.

\subsection{Distributions of the range and maximum absolute value
of Brownian motion}
\label{distributions}

This subsection presents series for the cumulative distribution functions
of the range and maximum absolute value of the standard Brownian motion
over the unit interval $[0, 1]$.
The terms in the series consist entirely of elementary functions that are easy
to program (as implemented in the codes mentioned in Section~\ref{results}).
The series converge rapidly and the present subsection proves rigorous bounds
on the numbers of terms required to attain a specified accuracy.
Subsubsection~\ref{rangeBrownian} gives the results
for the range of the standard Brownian motion ---
see especially Theorems~\ref{maincomborange} and~\ref{mainboundrange};
Subsubsection~\ref{maxBrownian} gives the results
for the maximum absolute value ---
see Theorems~\ref{maincombomax} and~\ref{mainboundmax}.

\subsubsection{Range of the standard Brownian motion}
\label{rangeBrownian}

This subsubsection presents Theorems~\ref{maincomborange}
and~\ref{mainboundrange}, enabling easy, rapid computation
of the cumulative distribution function for the range
(the maximum minus the minimum) of the standard Brownian motion
over the unit interval $[0, 1]$.

We define the series
\begin{equation}
\label{cdf}
F(x) = \sum_{k=1}^\infty
\left( \frac{8}{x^2} + \frac{8}{(2k-1)^2 \pi^2} \right)
\exp\left(-\frac{(2k-1)^2 \pi^2}{2x^2}\right)
\end{equation}
for any positive real number $x$.
The following theorem exhibits $F$ to be the cumulative distribution function
associated with the probability density function
of Formulae~3.6--3.8 of~\citet{feller};
Theorem~\ref{pdftheorem} below reviews those formulae.

\begin{theorem}
\label{cdftheorem}
Suppose that $F$ is the series defined in~(\ref{cdf}).
Then,
\begin{equation}
F(x) = \int_0^x f(y) \, dy
\end{equation}
for any positive real number $x$,
where
\begin{equation}
\label{pdf}
f(x) = \sqrt{\frac{2}{\pi x^2}} \;\cdot\;
\frac{\partial G}{\partial x}\left(\frac{x}{2}\right),
\end{equation}
with
\begin{equation}
\label{G}
G(x) = \frac{\sqrt{2\pi}}{x} \sum_{k=1}^\infty
\exp\left( -\frac{(2k-1)^2 \pi^2}{8x^2} \right).
\end{equation}
\end{theorem}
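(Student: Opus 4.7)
The plan is to verify the identity $F(x)=\int_0^x f(y)\,dy$ by the fundamental theorem of calculus: I would show that $F$ is continuously differentiable on $(0,\infty)$, that $F'(x)=f(x)$ termwise, and that $F(x)\to 0$ as $x\to 0^+$. The convergence and smoothness issues are relatively painless because each summand is multiplied by $\exp(-(2k-1)^2\pi^2/(2x^2))$, which decays super-geometrically in $k$; the real work is the algebraic bookkeeping that matches the termwise derivative of~(1) against the formula~(3).

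First, I would fix an arbitrary compact interval $[a,b]\subset(0,\infty)$ and show that on this interval the series~(1), together with the series obtained by differentiating it term by term once (and in fact twice), converge uniformly. The exponential factor $\exp(-(2k-1)^2\pi^2/(2b^2))$ dominates any polynomial prefactor in $k$ and $1/x$, so a standard Weierstrass $M$-test bound applies. This lets me differentiate~(1) termwise on $(0,\infty)$. The same argument shows that the series~(4) for $G$ and its termwise derivative converge uniformly on compact subsets of $(0,\infty)$, so $G$ is differentiable and $G'$ equals the series obtained by term-by-term differentiation.

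Second, I would perform the explicit computation. Differentiating each term of~(1) gives
\begin{equation*}
F'(x)=\sum_{k=1}^\infty\left[-\frac{16}{x^3}+\left(\frac{8}{x^2}+\frac{8}{(2k-1)^2\pi^2}\right)\frac{(2k-1)^2\pi^2}{x^3}\right]\exp\!\left(-\frac{(2k-1)^2\pi^2}{2x^2}\right),
\end{equation*}
which simplifies to $\sum_{k=1}^\infty\bigl[-8/x^3+8(2k-1)^2\pi^2/x^5\bigr]\exp(-(2k-1)^2\pi^2/(2x^2))$. On the other side, I would compute $G'$ from~(4) by applying the product rule to $\sqrt{2\pi}/u$ times the sum, evaluate at $u=x/2$ (noting $8u^2=2x^2$ and $4u^3=x^3/2$), and then multiply by $\sqrt{2/(\pi x^2)}$ to form $f(x)$. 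The factor $\sqrt{2}\cdot\sqrt{2\pi}/\sqrt{\pi}$ collapses to $2$, and the resulting series for $f(x)$ matches the simplified $F'(x)$ above term by term.

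Third, I would establish the boundary condition $\lim_{x\to 0^+}F(x)=0$. Because for each $k$ the prefactor $8/x^2+8/((2k-1)^2\pi^2)$ is polynomial in $1/x$ while the exponential $\exp(-(2k-1)^2\pi^2/(2x^2))$ dies faster than any such polynomial, each summand tends to $0$ as $x\to 0^+$; uniform termwise control (summing a dominating geometric tail in $k$) lets me pass the limit inside the sum. Combining continuity of $F$ on $(0,\infty)$, the termwise identity $F'(x)=f(x)$, and $F(0^+)=0$, the fundamental theorem of calculus yields $F(x)=\int_0^x f(y)\,dy$. The main obstacle is purely mechanical: carrying the two derivative computations through without sign or factor-of-two errors so that the two series really do match identically; the analytic convergence issues are straightforward given the ambient exponential decay.
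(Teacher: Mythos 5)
Your proposal is correct and follows essentially the same route as the paper's proof: verify $\lim_{x\to 0^+}F(x)=0$, differentiate the series for $F$ and for $G$ term by term, and check that the two resulting series for $F'$ and $f$ coincide. The only difference is that you explicitly justify the termwise differentiation via uniform convergence on compact subsets, a step the paper takes for granted.
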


\begin{proof}
Clearly
$\lim_{x \to 0^+} F(x) = 0 = \lim_{x \to 0^+} \int_0^x f(y) \, dy$,
so we need only show that $\frac{\partial F}{\partial x} = f$.

Differentiating~(\ref{G}) yields
\begin{equation}
\sqrt{\frac{2}{\pi}} \;\cdot\; \frac{\partial G}{\partial x}
= \sum_{k=1}^\infty \left(
\frac{2}{x} \frac{(2k-1)^2 \pi^2}{4x^3} - \frac{2}{x^2} \right)
\exp\left( -\frac{(2k-1)^2 \pi^2}{8x^2} \right),
\end{equation}
which when combined with~(\ref{pdf}) yields
\begin{equation}
\label{first}
f(x) = \sum_{k=1}^\infty \left(
\frac{8 (2k-1)^2 \pi^2}{x^5} - \frac{8}{x^3} \right)
\exp\left( -\frac{(2k-1)^2 \pi^2}{2x^2} \right).
\end{equation}

Differentiating~(\ref{cdf}) yields
\begin{equation}
\label{second}
\frac{\partial F}{\partial x} = \sum_{k=1}^\infty
\left[ \left( \frac{8}{x^2} + \frac{8}{(2k-1)^2 \pi^2} \right)
\left( \frac{(2k-1)^2 \pi^2}{x^3} \right) - \frac{16}{x^3} \right]
\exp\left(-\frac{(2k-1)^2 \pi^2}{2x^2}\right)
\end{equation}

The right-hand sides of~(\ref{first}) and~(\ref{second}) are equal,
completing the proof.
\end{proof}

Formulae~3.6--3.8 of~\citet{feller} state the following theorem,
though Formula~3.6 of~\citet{feller} is missing a factor of $1/\sqrt{t}$.
\begin{theorem}
\label{pdftheorem}
The probability density function for the range
of the standard Brownian motion over the unit interval $[0, 1]$ is given
by Formula~(\ref{pdf}). (The range is the maximum minus the minimum.)
\end{theorem}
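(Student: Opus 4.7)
The plan is to derive the density of the range $R = \max_{0 \le t \le 1} B(t) - \min_{0 \le t \le 1} B(t)$ in three main steps: first, obtain the joint distribution of the minimum and maximum via the reflection principle; second, extract the density of $R$ by changing variables and integrating; and third, transform the resulting image series into the form involving the function $G$ by Poisson summation.

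For the joint distribution, I would apply the method of images (the iterated reflection principle). For positive reals $u$ and $v$, the probability that standard Brownian motion starting at $0$ stays within the strip $(-u, v)$ over $[0,1]$ equals $\int_{-u}^{v} q_{u,v}(y) \, dy$, where
\begin{equation*}
q_{u,v}(y) = \sum_{n \in \mathbb{Z}} \bigl[ \phi(y + 2n(u+v)) - \phi(y + 2v + 2n(u+v)) \bigr],
\end{equation*}
with $\phi(z) = (2\pi)^{-1/2} e^{-z^2/2}$. Differentiating twice, once in $u$ and once in $v$, yields the joint density $g(u, v)$ of the pair $(-\min_{0 \le t \le 1} B(t), \, \max_{0 \le t \le 1} B(t))$. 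Changing variables to $L = u + v$ and $s = v$, then integrating $s$ over $(0, L)$ with $L = x$, produces an image-sum expression for the density of $R$ evaluated at $x$.

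The final step applies the Poisson summation formula (equivalently, Jacobi's theta-function identity) to convert the image sum over $n \in \mathbb{Z}$ into the dual series whose terms involve $\exp(-(2k-1)^2 \pi^2 / (2x^2))$; the restriction to odd indices $2k-1$ arises from cancellations among the alternating signs in the image expansion, which annihilate the even-indexed harmonics in the dual series. The principal obstacle is the bookkeeping in this last step: one must carefully track the Gaussian normalizing constants, the differentiation in $x$, and the dilation by $1/2$ inside the argument of $G$, so as to produce exactly the prefactor $\sqrt{2/(\pi x^2)}$ and the argument $x/2$ claimed in the theorem. A convenient consistency check is to compare the resulting expression for $f(x)$ against the series~(\ref{first}) derived in the proof of Theorem~\ref{cdftheorem}, which independently establishes the identity between $f$ and the derivative of the CDF series~(\ref{cdf}).
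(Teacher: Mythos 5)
The paper does not actually prove Theorem~\ref{pdftheorem}: it imports the result verbatim from Formulae~3.6--3.8 of~\citet{feller} (correcting a missing factor of $1/\sqrt{t}$), and the only computation carried out in the paper is the antidifferentiation in Theorem~\ref{cdftheorem}. Your proposal, by contrast, sketches an actual derivation, and the route you choose --- method of images for the joint law of $(-\min,\max)$, a mixed partial derivative to obtain the joint density, the change of variables $L=u+v$ followed by integrating out the location of the strip, and finally Jacobi's theta transformation (Poisson summation) to pass to the series in $\exp(-(2k-1)^2\pi^2/(2x^2))$ --- is the classical one; it is essentially Feller's own argument, so in substance you are supplying the proof that the paper outsources to a citation.

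Two caveats. First, your image expansion has a sign slip: the negative images of the source at $0$ for the strip $(-u,v)$ sit at $2v+2n(u+v)$, so the subtracted term should be $\phi(y-2v-2n(u+v))$ rather than $\phi(y+2v+2n(u+v))$; as written, your sum describes the strip $(-v,u)$. This is harmless for the survival probability (by the symmetry $B\mapsto -B$, or because integrating $y$ over the whole strip is insensitive to the swap), but the orientation must be fixed before taking $\partial^2/\partial u\,\partial v$ if the joint density itself is wanted. Second, the step you yourself flag as the obstacle --- the Poisson-summation bookkeeping that produces exactly the prefactor $\sqrt{2/(\pi x^2)}$ and the dilation $x/2$ in~(\ref{pdf}), together with the cancellation that removes the even harmonics --- is where essentially all of the work lies, and the proposal leaves it unexecuted; comparing the outcome against~(\ref{first}) is a legitimate consistency check of the algebra but not a substitute for carrying it out. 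As it stands, the proposal is a correct and well-chosen plan for a proof the paper omits, rather than a finished proof.
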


Combining Theorems~\ref{cdftheorem} and~\ref{pdftheorem} yields
the following theorem.
\begin{theorem}
\label{maincomborange}
The cumulative distribution function for the range (the maximum minus
the minimum) of the standard Brownian motion over the unit interval $[0, 1]$
is given by Formula~(\ref{cdf}).
\end{theorem}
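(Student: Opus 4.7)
The plan is to deduce Theorem \ref{maincomborange} as an immediate consequence of the two theorems just proved. The range $R$ of the standard Brownian motion on the unit interval $[0, 1]$ is a nonnegative random variable, so its cumulative distribution function at any $x > 0$ is simply $\Pr[R \le x] = \int_0^x \rho(y)\,dy$, where $\rho$ is the probability density function of $R$. Theorem \ref{pdftheorem} (the cited result of \citet{feller}) identifies $\rho$ with the function $f$ specified by Formulae (\ref{pdf}) and (\ref{G}), while Theorem \ref{cdftheorem} shows that $\int_0^x f(y)\,dy$ is exactly the series $F(x)$ of Formula (\ref{cdf}). Chaining these two identifications gives $\Pr[R \le x] = F(x)$ for every $x > 0$, which is the assertion of Theorem \ref{maincomborange}.

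There is essentially no obstacle remaining at this stage: all of the analytic work (verifying that $F(0^+) = 0$ and that $\partial F / \partial x = f$) has already been absorbed into the proof of Theorem \ref{cdftheorem}, and the probabilistic content (that $f$ is indeed the density of $R$) is the content of Theorem \ref{pdftheorem}. The only minor point worth flagging in the write-up is that the symbol $f$ in Theorem \ref{cdftheorem} and the density in Theorem \ref{pdftheorem} refer to literally the same function, namely the one built from Formulae (\ref{pdf}) and (\ref{G}), so the chaining is legitimate rather than merely suggestive. After that observation the proof compresses to a single sentence citing the two prior theorems, and the rapid-convergence bounds promised by Theorem \ref{mainboundrange} can then be layered on top without any further distributional input.
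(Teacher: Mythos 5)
Your proposal is correct and matches the paper exactly: the paper states this theorem as an immediate consequence of combining Theorem~\ref{cdftheorem} (which shows $F(x)=\int_0^x f(y)\,dy$) with Theorem~\ref{pdftheorem} (which identifies $f$ as the density of the range). No further argument is needed.
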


The following theorem upper-bounds the tail of the series
for $F$ defined in~(\ref{cdf}).
\begin{theorem}
\label{mainboundrange}
Suppose that $n$ is a positive integer.
Then, the tail of the series for $F$ defined in~(\ref{cdf}) satisfies
\begin{equation}
\label{bound}
\sum_{k=n+1}^\infty
\left( \frac{8}{x^2} + \frac{8}{(2k-1)^2 \pi^2} \right)
\exp\left(-\frac{(2k-1)^2 \pi^2}{2x^2}\right)
< \frac{4}{\sqrt{2\pi}} \left( \frac{1}{x} + \frac{x}{\pi^2} \right)
\exp\left( -\frac{(2n-1)^2 \pi^2}{2x^2} \right)
\end{equation}
for any positive real number $x$.
If $\epsilon$ is a positive real number less than 1 and
\begin{equation}
\label{kuterms}
n \ge \frac{1}{2} + \frac{x}{\pi\sqrt{2}}
\sqrt{\ln\left( \frac{4}{\epsilon \sqrt{2 \pi}}
\left( \frac{1}{x} + \frac{x}{\pi^2} \right) \right)},
\end{equation}
then the right-hand side of~(\ref{bound}) is at most $\epsilon$.
\end{theorem}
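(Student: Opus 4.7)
The plan is to split each summand as $b_k = b_k^{(1)} + b_k^{(2)}$, where $b_k^{(1)} = \frac{8}{x^2}\exp(-(2k-1)^2\pi^2/(2x^2))$ and $b_k^{(2)} = \frac{8}{(2k-1)^2\pi^2}\exp(-(2k-1)^2\pi^2/(2x^2))$, to bound each tail by an integral, and to evaluate the integrals using standard Gaussian-tail estimates. The profiles $h_1(u) = \frac{8}{x^2}\exp(-u^2\pi^2/(2x^2))$ and $h_2(u) = \frac{8}{u^2\pi^2}\exp(-u^2\pi^2/(2x^2))$ are strictly decreasing on $(0,\infty)$, so a routine monotone-integral comparison yields $\sum_{k=n+1}^\infty h_i(2k-1) \le \tfrac{1}{2}\int_{2n-1}^\infty h_i(u)\, du$ (each term is dominated by one-half the integral of $h_i$ over the width-$2$ interval $[2k-3,2k-1]$, and these intervals telescope into $[2n-1,\infty)$). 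The substitution $v = u\pi/(x\sqrt{2})$ reduces both resulting integrals to multiples of $\int_{v_0}^\infty e^{-v^2}\, dv$ and $\int_{v_0}^\infty v^{-2}e^{-v^2}\, dv$, where $v_0 = (2n-1)\pi/(x\sqrt{2})$ and $e^{-v_0^2}$ is precisely the target exponential $\exp(-(2n-1)^2\pi^2/(2x^2))$.

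For the Gaussian tails themselves, I would use $\int_{v_0}^\infty e^{-v^2}\, dv \le \frac{\sqrt{\pi}}{2}e^{-v_0^2}$, which follows in one line from the shift $v = v_0 + w$ together with $e^{-2v_0 w} \le 1$ for $v_0, w \ge 0$; for the other integral, one pulls $1/v^2 \le 1/v_0^2$ outside and then applies the same estimate. Tracking the substitution constants produces $\sum b_k^{(1)} < \frac{4}{x\sqrt{2\pi}}\,e^{-v_0^2}$ and $\sum b_k^{(2)} < \frac{2x\sqrt{2\pi}}{(2n-1)^2\pi^3}\,e^{-v_0^2}$. The only real subtlety --- the hard part of the argument --- is that this bound on $\sum b_k^{(2)}$ carries an unwanted factor $1/(2n-1)^2$; one absorbs it via $(2n-1)^2 \ge 1$ (true for $n \ge 1$) together with the algebraic identity $\frac{2\sqrt{2\pi}}{\pi^3} = \frac{4}{\sqrt{2\pi}\,\pi^2}$, turning the second bound into $\frac{4}{\sqrt{2\pi}}\cdot\frac{x}{\pi^2}\,e^{-v_0^2}$. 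Adding the two bounds recovers the right-hand side of~(\ref{bound}) exactly.

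The bound~(\ref{kuterms}) follows by taking logarithms in $\frac{4}{\sqrt{2\pi}}(1/x + x/\pi^2)\exp(-(2n-1)^2\pi^2/(2x^2)) \le \epsilon$ and solving for $n$. That the logarithm's argument exceeds $1$ (so that~(\ref{kuterms}) is a well-defined real lower bound) follows from $\epsilon < 1$ together with the AM--GM inequality $1/x + x/\pi^2 \ge 2/\pi$, which forces $\frac{4(1/x + x/\pi^2)}{\sqrt{2\pi}} \ge \frac{8}{\pi\sqrt{2\pi}} > 1$.
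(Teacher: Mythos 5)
Your proof is correct and follows essentially the same route as the paper's: a monotone sum-to-integral comparison, the substitution $v = u\pi/(x\sqrt{2})$, and the Gaussian-tail estimate $\int_{v_0}^\infty e^{-v^2}\,dv \le \tfrac{\sqrt{\pi}}{2}e^{-v_0^2}$. The only (cosmetic) difference is where the crude bound on the squared odd integer is applied --- the paper uses $(2k-1)^2 \ge 1$ to bound the summand's prefactor before integrating, whereas you keep $1/u^2$ inside the integral and invoke $(2n-1)^2 \ge 1$ afterward; both land on the identical right-hand side of~(\ref{bound}), and your verification that the logarithm's argument in~(\ref{kuterms}) exceeds $1$ is a welcome extra detail the paper leaves implicit.
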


\begin{proof}
Clearly,
\begin{equation}
\label{initial}
\sum_{k=n+1}^\infty
\!\left( \frac{8}{x^2} + \frac{8}{(2k-1)^2 \pi^2} \right)
\exp\left(-\frac{(2k-1)^2 \pi^2}{2x^2}\right)
< \left( \frac{8}{x^2} + \frac{8}{\pi^2} \right) \sum_{k=n+1}^\infty
\!\exp\left(-\frac{(2k-1)^2 \pi^2}{2x^2}\right)\!,
\end{equation}
\begin{equation}
\sum_{k=n+1}^\infty \exp\left(-\frac{(2k-1)^2 \pi^2}{2x^2}\right)
< \int_n^\infty \exp\left(-\frac{(2t-1)^2 \pi^2}{2x^2}\right) \, dt,
\end{equation}
\begin{equation}
\int_n^\infty \exp\left(-\frac{(2t-1)^2 \pi^2}{2x^2}\right) \, dt
= \frac{x}{\pi \sqrt{2}}
\int_{((2n-1) \pi)/(x\sqrt{2})}^\infty \exp(-u^2) \, du,
\end{equation}
and
\begin{equation}
\label{last}
\int_{\frac{(2n-1) \pi}{x\sqrt{2}}}^\infty \exp(-u^2) \, du
\le \exp\left( -\frac{(2n-1)^2 \pi^2}{2x^2} \right)
\int_0^\infty \!\exp(-u^2) \, du
= \frac{\sqrt{\pi}}{2} \exp\left( -\frac{(2n-1)^2 \pi^2}{2x^2} \right)\!.
\end{equation}
Combining~(\ref{initial})--(\ref{last}) yields~(\ref{bound}).
\end{proof}

\subsubsection{Maximum absolute value of the standard Brownian motion}
\label{maxBrownian}

This subsubsection presents Theorems~\ref{maincombomax}
and~\ref{mainboundmax}, enabling easy, rapid computation
of the cumulative distribution function for the maximum
of the absolute value of the standard Brownian motion
over the unit interval $[0, 1]$.

The following theorem states Formulae~3.8 and~5.2
of~\citet{darling-siegert}; see also the displayed formula immediately before
Formula~5.2 of~\citet{darling-siegert},
or Formula~2.22 of~\citet{ciesielski-taylor}
and the sentence of~\citet{ciesielski-taylor} immediately following.
\begin{theorem}
\label{maincombomax}
The cumulative distribution function for the maximum of the absolute value
of the standard Brownian motion over the unit interval $[0, 1]$ is
\begin{equation}
\label{kscdf}
D(x) = \frac{4}{\pi} \sum_{k=1}^\infty \frac{(-1)^{k-1}}{2k-1}
\exp\left(-\frac{(2k-1)^2 \pi^2}{8x^2}\right)
\end{equation}
for any positive real number $x$.
\end{theorem}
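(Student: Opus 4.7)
The plan is to recognize $D(x)$ as a survival probability for Brownian motion killed on leaving the interval $(-x, x)$, and then to compute it via an eigenfunction expansion of the Dirichlet Laplacian on that interval. Set $T_x = \inf\{t \ge 0 : |B_t| = x\}$. Then
$$D(x) = P\!\left(\max_{0 \le t \le 1} |B_t| \le x\right) = P(T_x > 1) = \int_{-x}^x q(1, y) \, dy,$$
where $q(t, y)$ is the transition density at time $t$ of Brownian motion started at $0$ and killed on leaving $(-x, x)$; that is, $q$ solves $\partial_t q = \tfrac{1}{2}\partial_y^2 q$ on $(-x, x)$ with Dirichlet boundary data $q(t, \pm x) = 0$ and initial data $q(0, y) = \delta(y)$.

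First I would solve for $q$ by separation of variables. The $L^2(-x, x)$-orthonormal eigenfunctions of $-\tfrac{1}{2}\partial_y^2$ with zero boundary data at $\pm x$ are $\phi_n(y) = x^{-1/2}\sin(n\pi(y+x)/(2x))$ for $n = 1, 2, \ldots$, with eigenvalues $\lambda_n = n^2 \pi^2/(8x^2)$; together they form an orthonormal basis of $L^2(-x, x)$. Expanding $\delta(y)$ in this basis and propagating via the semigroup gives
$$q(t, y) = \sum_{n=1}^\infty e^{-\lambda_n t}\, \phi_n(0)\, \phi_n(y).$$
Since $\phi_n(0) = x^{-1/2} \sin(n\pi/2)$ vanishes for even $n$, only the odd-indexed terms survive; writing $n = 2k-1$ and using the angle-sum identity for sine gives $\phi_{2k-1}(0) = x^{-1/2}(-1)^{k-1}$ and $\phi_{2k-1}(y) = x^{-1/2}(-1)^{k-1}\cos((2k-1)\pi y/(2x))$.

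Next I would integrate termwise over $y \in (-x, x)$. The two factors of $(-1)^{k-1}$ in $\phi_{2k-1}(0)\,\phi_{2k-1}(y)$ cancel, leaving $x^{-1}\cos((2k-1)\pi y/(2x))$; then $\int_{-x}^x \cos((2k-1)\pi y/(2x))\, dy = 4x(-1)^{k-1}/((2k-1)\pi)$ reintroduces the alternating sign, and collecting the prefactors yields exactly the coefficient $4(-1)^{k-1}/((2k-1)\pi)$ that appears in~(\ref{kscdf}).

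The main obstacle is not any single calculation but rather the technical justification that the interchange of summation and integration is legitimate and that the candidate $q$ really does represent the killed transition density, not merely some formal distributional relative. Both points are settled by the super-geometric decay of $e^{-\lambda_n t}$ in $n$ for any fixed $t > 0$, which yields absolute and uniform convergence of the eigenfunction series on $[-x, x]$ and thereby permits termwise integration; the identification with the killed semigroup then follows from uniqueness of the bounded solution to the initial-boundary-value problem. As a sanity check, the resulting series tends to $0$ as $x \to 0^+$ and to $1$ as $x \to \infty$, consistent with its interpretation as a cumulative distribution function.
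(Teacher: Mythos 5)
Your argument is correct, and it is genuinely different from what the paper does: the paper offers no proof of Theorem~\ref{maincombomax} at all, simply quoting Formulae~3.8 and~5.2 of Darling and Siegert (whose own derivation goes through Laplace transforms of first-passage times, and whose companion reference Ciesielski--Taylor works with the Dirichlet heat semigroup much as you do). Your spectral derivation checks out in every detail: the eigenfunctions $\phi_n(y) = x^{-1/2}\sin(n\pi(y+x)/(2x))$ are indeed orthonormal on $(-x,x)$ with eigenvalues $n^2\pi^2/(8x^2)$ for $-\tfrac12\partial_y^2$; the even modes drop out because $\phi_n(0)=x^{-1/2}\sin(n\pi/2)$; the two factors of $(-1)^{k-1}$ cancel and the $y$-integral $\int_{-x}^x\cos((2k-1)\pi y/(2x))\,dy = 4x(-1)^{k-1}/((2k-1)\pi)$ restores the alternating sign, giving exactly the coefficients of~(\ref{kscdf}) at $t=1$. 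The Gaussian-in-$n$ decay of $e^{-\lambda_n t}$ does justify termwise integration, and identifying the eigenfunction series with the killed transition density via uniqueness for the initial-boundary-value problem is the standard (and adequate) way to close the argument. What your route buys is a self-contained proof that lands directly on the series that converges rapidly for large $x$ --- precisely the form the paper needs for computation --- whereas the more elementary reflection-principle (method-of-images) derivation produces a different double series that converges rapidly only for small $x$ and requires a Poisson-summation (theta-function) identity to reach~(\ref{kscdf}). The one cosmetic point worth adding is that $P(\max_{0\le t\le 1}|B_t| = x) = 0$, so there is no ambiguity between $\le$ and $<$ in reading off the cumulative distribution function from the survival probability $P(T_x>1)$.
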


The following theorem follows from the Leibniz bound on the tail
of an alternating series for which the absolute values of the terms
in the series decrease monotonically to zero (namely, the absolute value
of the leading term of the tail is an upper bound on the absolute value
of the tail; the bound in Theorem~\ref{mainboundmax} would also be valid
if the summation started from $n$ rather than $n+1$).
\begin{theorem}
\label{mainboundmax}
Suppose that $n$ is a positive integer.
Then, the tail of the series for $D$ defined in~(\ref{kscdf}) satisfies
\begin{equation}
\label{altbound}
\left| \frac{4}{\pi} \sum_{k=n+1}^\infty \frac{(-1)^{k-1}}{2k-1}
\exp\left(-\frac{(2k-1)^2 \pi^2}{8x^2}\right) \right|
< \frac{4}{\pi} \exp\left(-\frac{(2n-1)^2 \pi^2}{8x^2}\right)
\end{equation}
for any positive real number $x$.
If $\epsilon$ is a positive real number less than 1 and
\begin{equation}
\label{ksterms}
n \ge \frac{1}{2}
+ \frac{x\sqrt{2}}{\pi} \sqrt{\ln\left(\frac{4}{\pi \epsilon}\right)},
\end{equation}
then the right-hand side of~(\ref{altbound}) is at most $\epsilon$.
\end{theorem}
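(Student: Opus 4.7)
The plan is to prove the two parts in sequence: first establish the tail bound by invoking the Leibniz alternating series estimate (already flagged in the sentence preceding the theorem), then solve the resulting inequality for $n$ to obtain the explicit bound~(\ref{ksterms}).

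For the inequality~(\ref{altbound}), I would set $a_k = \frac{1}{2k-1} \exp\!\left(-\frac{(2k-1)^2 \pi^2}{8x^2}\right)$ and verify the three hypotheses of the Leibniz test: the signs alternate because of the factor $(-1)^{k-1}$; the sequence $a_k$ decreases monotonically in $k$ because both $\frac{1}{2k-1}$ and $\exp\!\left(-\frac{(2k-1)^2 \pi^2}{8x^2}\right)$ are positive and strictly decreasing in $k$ (the latter since $(2k-1)^2$ is increasing), so their product decreases; and $a_k \to 0$ as $k \to \infty$ because of the Gaussian-type factor. The Leibniz estimate then bounds the absolute value of the tail by the absolute value of its leading term, namely $a_{n+1} = \frac{1}{2n+1} \exp\!\left(-\frac{(2n+1)^2 \pi^2}{8x^2}\right)$. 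Multiplying by $\frac{4}{\pi}$ and loosening using $\frac{1}{2n+1} < 1$ and $(2n+1)^2 > (2n-1)^2$ yields~(\ref{altbound}). This same looseness is precisely what makes the bound valid for a sum starting from $n$ rather than $n+1$, as remarked in the text: in that case the leading term is $a_n \le \exp\!\left(-\frac{(2n-1)^2 \pi^2}{8x^2}\right)$.

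For the second part, I would set the right-hand side of~(\ref{altbound}) at most $\epsilon$ and solve for $n$. Taking logarithms of $\frac{4}{\pi}\exp\!\left(-\frac{(2n-1)^2\pi^2}{8x^2}\right) \le \epsilon$ gives $\frac{(2n-1)^2\pi^2}{8x^2} \ge \ln\!\left(\frac{4}{\pi\epsilon}\right)$, which is positive because $\epsilon < 1$ implies $\frac{4}{\pi\epsilon} > 1$ (and in particular the square root in~(\ref{ksterms}) is real). Taking the positive square root and rearranging produces the threshold $n \ge \frac{1}{2} + \frac{x\sqrt{2}}{\pi}\sqrt{\ln(4/(\pi\epsilon))}$, which is exactly~(\ref{ksterms}).

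There is no real obstacle here; the only place where one must be slightly careful is confirming the monotonic decrease of $a_k$ so that the Leibniz test applies (a one-line observation since both factors decrease), and checking that $\epsilon < 1$ ensures the logarithm in~(\ref{ksterms}) is nonnegative. The overall proof should be short, mirroring the style of the proof of Theorem~\ref{mainboundrange}.
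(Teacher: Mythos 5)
Your proof is correct and follows exactly the route the paper indicates (the Leibniz alternating-series bound on the leading tail term, loosened via $\tfrac{1}{2n+1}<1$ and $(2n+1)^2>(2n-1)^2$, followed by solving the resulting inequality for $n$); the paper itself only sketches this in the sentence preceding the theorem, so your write-up simply fills in the same details. No issues.
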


\subsection{Calibration and deviation of a subpopulation
from the full population}
\label{tygertsummary}

This subsection summarizes methods of~\citet{tygert_full}
for assessing deviation of a sub\-population from the full population
and for assessing the calibration of probabilistic predictions.
The primary goal of this subsection is to introduce the Kolmogorov-Smirnov
and Kuiper metrics, as well as factors suitable for normalizing them
so as to facilitate evaluation of attained significance levels (P-values).

We consider $n$ real numbers $S_1$,~$S_2$, \dots, $S_n$
known as ``scores'' (or sometimes as ``predicted probabilities''
when calibrating probabilistic predictions),
each paired with a real-valued ``response,'' $R_1$,~$R_2$, \dots, $R_n$,
as well as a positive ``weight,'' $W_1$,~$W_2$, \dots, $W_n$;
we view the scores $S_1$,~$S_2$, \dots, $S_n$
and weights $W_1$,~$W_2$, \dots, $W_n$ as given, not random,
while we view the responses $R_1$,~$R_2$, \dots, $R_n$ as random.
We assume throughout that all responses are stochastically independent
(allowing dependence among the responses would be far beyond the scope
of the present paper).
Without loss of generality, we assume that $S_1 < S_2 < \dots < S_n$
(perturbing the original scores slightly in order to ensure their uniqueness,
if necessary).
We consider also a given function $r$ which returns the expected response
averaged over the full population at any specified score $s$;
that is, $r(s)$ is the expected value of the response for all members
of the full population whose score is $s$.
When assessing the calibration of probabilistic predictions,
the score $s$ is a predicted probability and the expected response $r(s)$
is supposed to match the prediction, $s$; hence, $r(s) = s$
when assessing calibration.

In order to gauge deviation of the observed responses
$R_1$,~$R_2$, \dots, $R_n$ from the given function $r$,
we construct the sequence of cumulative differences
\begin{equation}
B_{\ell} = \frac{\sum_{k=1}^{\ell} (R_k - r(S_k)) \, W_k}{\sum_{k=1}^n W_k}
\end{equation}
for $\ell = 1$,~$2$, \dots, $n$.
We also construct the sequence of cumulative weights
\begin{equation}
A_{\ell} = \frac{\sum_{k=1}^{\ell} W_k}{\sum_{k=1}^n W_k}
\end{equation}
for $\ell = 1$,~$2$, \dots, $n$.
Figures~\ref{la}, \ref{stanislaus}, and~\ref{napa}
of Subsection~\ref{analysis} below plot $B_1$,~$B_2$, \dots, $B_n$
versus $A_1$,~$A_2$, \dots, $A_n$ for some numerical examples.
A simple calculation of~\citet{tygert_full} shows that the expected slope
of the line graph of $B_1$,~$B_2$, \dots, $B_n$ versus 
$A_1$,~$A_2$, \dots, $A_n$ from $A_{k-1}$ to $A_k$ is $\E[R_k] - r(S_k)$;
that is, the expected slope is simply the deviation of the expected response
from the full population's, in a graph for which $A_1$,~$A_2$, \dots, $A_n$
are the abscissae (the horizontal coordinates) and $B_1$,~$B_2$, \dots, $B_n$
are the ordinates (the vertical coordinates).
Thus, steep slope over a long range indicates
significant weighted average deviation over that range.
Indeed, the slope of the secant line connecting two points on the graph becomes
the weighted average deviation over the long range of scores
between those points.

In particular, absence of significant deviation results
in a flat graph that is nearly horizontal.
Two metrics which measure deviations away from 0
(thus characterizing ``goodness-of-fit'') are the maximum absolute value
\begin{equation}
\label{kolmogorov-smirnov}
G = \max_{1 \le k \le n} |B_k|
\end{equation}
and the range (the maximum value minus the minimum value)
\begin{equation}
\label{kuiper}
H = \max_{0 \le k \le n} B_k - \min_{0 \le k \le n} B_k,
\end{equation}
where $B_0 = 0$; Remark~1 of~\citet{tygert_full} justifies including $B_0 = 0$,
a justification analogous to why~\citet{kuiper} introduced
an analogous statistic decades earlier in a related context.
The absolute value of the total deviation
$\sum_{k \in I} (R_k - r(S_k)) \, W_k \Big/ \sum_{k=1}^n W_k$
over any interval $I$ of indices is less than or equal to $H$; indeed,
\begin{equation}
H = \max_I \left| \frac{\sum_{k \in I} (R_k - r(S_k)) \, W_k}
                       {\sum_{k=1}^n W_k} \right|,
\end{equation}
where the maximum is over every interval $I$ of indices.
The statistic $G$ is due to~\citet{kolmogorov} and~\citet{smirnov};
$H$ is due to~\citet{kuiper}.

Under the null hypothesis that the response at every score $s$
is an independent Bernoulli variate taking the value $1$
with probability $r(s)$ and the value $0$ with probability $1-r(s)$,
calibrating attained significance levels (P-values) for these statistics
involves normalization by
\begin{equation}
\label{Bernoulli}
\sigma = \frac{\sqrt{\sum_{k=1}^n r(S_k) \cdot (1-r(S_k)) \cdot (W_k)^2}}
              {\sum_{k=1}^n W_k};
\end{equation}
of course, such a null hypothesis can be appropriate
only when each $R_k$ is either 0 or 1, for each $k = 1$,~$2$, \dots, $n$.
More generally, under a null hypothesis for which the response at score $s$
is expected to have a variance $v(s)$ centered around $r(s)$,
the normalization would be by the quantity
\begin{equation}
\label{general}
\sigma = \frac{\sqrt{\sum_{k=1}^n v(S_k) \cdot (W_k)^2}}{\sum_{k=1}^n W_k};
\end{equation}
needless to say, $v(s) = r(s) \cdot (1-r(s))$ for a Bernoulli variate
taking the value $1$ with probability $r(s)$
and the value $0$ with probability $1-r(s)$, consistent with~(\ref{Bernoulli}).
``Normalization'' means considering the ratios $G/\sigma$ and $H/\sigma$
rather than the unnormalized $G$ and $H$
from~(\ref{kolmogorov-smirnov}) and~(\ref{kuiper}).

In many cases in practice, such a large sample of the full population
is available that $r$ and $v$ can be estimated to high accuracy from the data;
\citet{tygert_full} elaborates methods for such estimation. The estimates
of $v$ used in Section~\ref{results} of the present paper adjust for bias
via dividing by 1 minus the ratio of the sum of the squares of the weights
to the square of the sum of the weights.
When all weights are equal, this adjustment simply multiplies by $m / (m - 1)$,
where $m$ is the number of weights,
so that the estimate of the variance involves dividing by $m-1$
(rather than by $m$) in the calculation of the empirical variance.
While the present paper makes no claim whatsoever as to the proper resolution
of the historic debate about whether estimates of the variance
should involve dividing by $m$ or by $m-1$, the estimates (when used
in the context of the cumulative statistics) did improve very slightly
when adjusting for bias in the estimates.

\subsection{Calibration of P-values for the Kolmogorov-Smirnov statistic
and the Kuiper statistic}
\label{calibration}

This subsection derives Corollary~\ref{corollary},
providing a method for the calculation of attained significance levels
(P-values) for the Kolmogorov-Smirnov and Kuiper metrics introduced
in the previous subsection.

Propositions 1--4 of~\citet{diebolt} prove the following theorem.
Technically, \citet{diebolt} provides much stronger and more general results,
characterizing not only convergence but also the convergence rates.
See also closely related results of~\citet{stute}.
Earlier results of~\citet{delgado} motivated the work
of~\citet{diebolt} and~\citet{stute} (among others),
and are also closely related to the metrics of~\citet{tygert_two}.
The proofs of~\citet{delgado} are in some ways simpler and easier to grasp,
despite being restricted to a somewhat more special case,
and are a superb starting point in addition to being of substantial
independent importance, both practically and theoretically.

\begin{theorem}
\label{kscal}
Assume the null hypothesis that the subpopulation has no expected deviation
from the full population (that is, $\E[R_k] = r(S_k)$
for $k = 1$,~$2$, \dots, $n$) and that the third moment obeys
$\E[|R_k - r(S_k)|^3] \le C (v(S_k))^{3/2}$ for $k = 1$,~$2$, \dots, $n$,
where $C$ is a finite positive real number that does not depend on $n$.
Suppose that the scores $S_1$, $S_2$, \dots, $S_n$ are distinct for each $n$
and
$\max_{1 \le k \le n} v(S_k) \cdot (W_k)^2 / \sum_{j=1}^n v(S_j) \cdot (W_j)^2$
converges to 0 in the limit as $n$ becomes large.
Then, with $G$ defined in~(\ref{kolmogorov-smirnov})
and $\sigma$ defined in~(\ref{general}),
the normalized Kolmogorov-Smirnov statistic $G/\sigma$
for measuring deviation of a subpopulation from the full population
converges in distribution to the maximum of the absolute value
of the standard Brownian motion over the unit interval $[0, 1]$.
The normalized Kolmogorov-Smirnov statistic $G/\sigma$
for measuring calibration converges in distribution
to the maximum of the absolute value of the standard Brownian motion
over the unit interval $[0, 1]$, too, when taking the expected response
at each score to be equal to the score, that is, $r(s) = s$
for every score $s$.
\end{theorem}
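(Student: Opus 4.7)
The plan is to express $B_\ell/\sigma$ as the sequence of partial sums of independent mean-zero random variables whose variances sum to $1$, to invoke a functional central limit theorem to obtain weak convergence of the polygonally interpolated process to standard Brownian motion on $[0,1]$, and to apply the continuous mapping theorem to the supremum-of-absolute-value functional. Concretely, let $\xi_k = (R_k - r(S_k))\,W_k$, so that under the null $\E[\xi_k] = 0$, $\mathrm{Var}(\xi_k) = v(S_k)\,(W_k)^2$, and $\E[|\xi_k|^3] \le C\,v(S_k)^{3/2}\,(W_k)^3$. With $s_n^2 = \sum_{k=1}^n v(S_k)\,(W_k)^2$ one has $\sigma = s_n/\sum_j W_j$ and $B_\ell/\sigma = s_n^{-1}\sum_{k=1}^\ell \xi_k$. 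Set $t_\ell = s_n^{-2}\sum_{k=1}^\ell v(S_k)\,(W_k)^2$ with $t_0 = 0$ and $t_n = 1$, and define $Z_n \in C[0,1]$ by polygonal interpolation of the points $(t_\ell,\, B_\ell/\sigma)$; each increment of $Z_n$ on $[t_{\ell-1}, t_\ell]$ has variance equal to the matching Brownian increment variance $t_\ell - t_{\ell-1}$.

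The next step is to verify Lyapunov's condition using the third-moment hypothesis; a direct bound gives
\begin{equation*}
\frac{1}{s_n^3}\sum_{k=1}^n \E[|\xi_k|^3]
\;\le\; \frac{C}{s_n^3}\sum_{k=1}^n v(S_k)\,(W_k)^2 \cdot \sqrt{v(S_k)\,(W_k)^2}
\;\le\; C\,\sqrt{\frac{\max_{1\le k\le n} v(S_k)\,(W_k)^2}{\sum_{j=1}^n v(S_j)\,(W_j)^2}},
\end{equation*}
which tends to $0$ by hypothesis. Lyapunov's condition implies Lindeberg's, yielding finite-dimensional convergence of $Z_n$ to standard Brownian motion $W$ via the Lindeberg--Feller central limit theorem applied at any fixed finite collection of times; tightness in $C[0,1]$ follows from the standard moment-based criterion for partial-sum processes under Lindeberg. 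Taken together, $Z_n \Rightarrow W$ in $C[0,1]$ with the uniform topology.

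The continuous mapping theorem then finishes the argument. The functional $\phi(f) = \sup_{t \in [0,1]}|f(t)|$ is continuous on $C[0,1]$, so $\phi(Z_n) \Rightarrow \phi(W)$. Because $Z_n$ is linear between successive nodes $t_\ell$, its absolute value is maximized at some node, where $|Z_n(t_\ell)| = |B_\ell|/\sigma$; hence $\phi(Z_n) = \max_{0\le \ell \le n}|B_\ell|/\sigma = G/\sigma$ (the value $B_0 = 0$ cannot increase the maximum). This proves the deviation assertion, and the calibration version is the special case $r(s) = s$, for which the same reduction applies verbatim once the stated third-moment assumption is in force.

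The principal obstacle is the functional-CLT step itself: Donsker's classical theorem assumes i.i.d.\ summands, whereas here the $\xi_k$ have heterogeneous variances and live on a non-uniform time grid $\{t_\ell\}$, so one must invoke a Lindeberg-type invariance principle for triangular arrays. This is precisely what the cited works of~\citet{delgado}, \citet{diebolt}, and~\citet{stute} supply in closely related settings, and the third-moment hypothesis together with the max-to-sum variance condition provides exactly the ingredients their results require. Once the invariance principle is in hand, the Lyapunov verification and the continuous-mapping step are essentially one-line arguments.
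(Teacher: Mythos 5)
Your proposal is correct and follows essentially the route the paper takes: the paper's ``proof'' of this theorem is simply a citation to Propositions 1--4 of \citet{diebolt}, whose content is exactly the Lindeberg/Lyapunov invariance principle for the partial-sum process on the variance time scale $t_\ell = s_n^{-2}\sum_{k\le\ell} v(S_k)(W_k)^2$ followed by the continuous mapping theorem, which is what you spell out. Your Lyapunov computation and the reduction of $\sup_{t}|Z_n(t)|$ to $\max_{0\le\ell\le n}|B_\ell|/\sigma = G/\sigma$ at the nodes are both correct, and the one step you do not prove --- the triangular-array functional central limit theorem in $C[0,1]$ --- is the same classical ingredient the paper itself delegates to the cited literature, so there is no genuine gap.
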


The theorems of~\citet{diebolt} similarly yield the analogous theorem
for the Kuiper statistic:

\begin{theorem}
\label{kucal}
Assume the null hypothesis that the subpopulation has no expected deviation
from the full population (that is, $\E[R_k] = r(S_k)$
for $k = 1$,~$2$, \dots, $n$) and that the third moment obeys
$\E[|R_k - r(S_k)|^3] \le C (v(S_k))^{3/2}$ for $k = 1$,~$2$, \dots, $n$,
where $C$ is a finite positive real number that does not depend on $n$.
Suppose that the scores $S_1$, $S_2$, \dots, $S_n$ are distinct for each $n$
and
$\max_{1 \le k \le n} v(S_k) \cdot (W_k)^2 / \sum_{j=1}^n v(S_j) \cdot (W_j)^2$
converges to 0 in the limit as $n$ becomes large.
Then, with $H$ defined in~(\ref{kuiper})
and $\sigma$ defined in~(\ref{general}),
the normalized Kuiper statistic $H/\sigma$
for measuring deviation of a subpopulation from the full population
converges in distribution to the range of the standard Brownian motion
over the unit interval $[0, 1]$.
(The range is the maximum minus the minimum.)
The normalized Kuiper statistic $H/\sigma$ for measuring calibration
converges in distribution to the range of the standard Brownian motion
over the unit interval $[0, 1]$, too, when taking the expected response
at each score to be equal to the score, that is, $r(s) = s$
for every score $s$.
\end{theorem}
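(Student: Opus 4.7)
The plan is to reduce Theorem~\ref{kucal} to an invariance principle for the normalized cumulative-sum process, followed by an application of the continuous mapping theorem to the range functional. This mirrors the argument supporting Theorem~\ref{kscal}; only the final continuous functional differs.

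First I would introduce a time reparameterization by cumulative variance. Let
\begin{equation}
T_\ell = \frac{\sum_{k=1}^\ell v(S_k) \cdot (W_k)^2}{\sum_{j=1}^n v(S_j) \cdot (W_j)^2}
\end{equation}
for $\ell = 0, 1, \dots, n$, so that $0 = T_0 < T_1 < \cdots < T_n = 1$, and let $Y_n$ be the piecewise-linear function on $[0,1]$ with $Y_n(T_\ell) = B_\ell/\sigma$ for each $\ell$ (recalling that $B_0 = 0$). Under the null hypothesis, the summands $(R_k - r(S_k)) W_k \big/ (\sigma \sum_j W_j)$ are independent with mean zero, and their variances sum to one; the variance increment between $T_{\ell-1}$ and $T_\ell$ equals $T_\ell - T_{\ell-1}$, matching the time increment. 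The third-moment bound $\E[|R_k - r(S_k)|^3] \le C (v(S_k))^{3/2}$ combined with the hypothesis $\max_k v(S_k) (W_k)^2 \big/ \sum_j v(S_j) (W_j)^2 \to 0$ yields a Lyapunov condition. Propositions~1--4 of~\citet{diebolt} (and the closely related results of~\citet{stute}) then deliver weak convergence $Y_n \Rightarrow W$ in $C[0,1]$ to a standard Brownian motion $W$ on $[0,1]$.

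Next I would apply the continuous mapping theorem with the range functional $\phi(\omega) = \sup_{t \in [0,1]} \omega(t) - \inf_{t \in [0,1]} \omega(t)$. Because $Y_n$ is piecewise linear with nodes at $T_0, T_1, \dots, T_n$, its supremum and infimum over $[0,1]$ are attained at nodes, so
\begin{equation}
\phi(Y_n) = \max_{0 \le k \le n} \frac{B_k}{\sigma} \;-\; \min_{0 \le k \le n} \frac{B_k}{\sigma} = \frac{H}{\sigma},
\end{equation}
matching the definition~(\ref{kuiper}) exactly (including $B_0 = 0$). The functional $\phi$ is continuous on $C[0,1]$ under the uniform metric, so the continuous mapping theorem yields $H/\sigma \Rightarrow \phi(W)$, which is the range of the standard Brownian motion over $[0,1]$. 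The calibration assertion (with $r(s) = s$) is the special case in which the reference is the Bernoulli distribution with parameter equal to the score, and the argument carries over verbatim since the third-moment and max-variance conditions are automatically satisfied for bounded responses.

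The main obstacle is verifying that the invariance principle of~\citet{diebolt} truly applies in the present triangular-array setting: the summands depend on $n$ through the deterministic scores and weights, and the time reparameterization is nonuniform across $n$. The Lyapunov-type condition itself follows routinely by combining the third-moment hypothesis with the max-variance assumption, but this verification must already be performed in order to obtain Theorem~\ref{kscal}; once it is in hand, the passage from the maximum-absolute-value functional to the range functional is immediate via the same continuous mapping framework, so no new probabilistic input beyond what underlies Theorem~\ref{kscal} is needed.
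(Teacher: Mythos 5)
Your proposal is correct and follows essentially the same route as the paper, which simply invokes the invariance principle of Propositions 1--4 of \citet{diebolt} (with \citet{stute} and \citet{delgado} as related references) and passes from the maximum-absolute-value functional to the range functional; your explicit time reparameterization by cumulative variance, Lyapunov verification, and continuous-mapping step with the range functional are exactly the details the paper leaves to those citations.
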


Putting everything together yields the following.
\begin{corollary}
\label{corollary}
Taking 1 and subtracting the function $D$ from~(\ref{kscdf})
applied to the normalized Kolmogorov-Smirnov statistic $G/\sigma$
yields estimates which converge in distribution to the asymptotic P-value
as $n$ becomes large (due to Theorems~\ref{maincombomax}
and~\ref{kscal}) --- this is $1 - D(G/\sigma)$.
Evaluating 1 minus the function $F$ from~(\ref{cdf})
applied to the normalized Kuiper statistic $H/\sigma$
yields estimates which converge in distribution to the asymptotic P-value
as $n$ becomes large (due to Theorems~\ref{maincomborange}
and~\ref{kucal}) --- this is $1 - F(H/\sigma)$.
The Kolmogorov-Smirnov metric $G$ is defined in~(\ref{kolmogorov-smirnov}),
the Kuiper metric $H$ is defined in~(\ref{kuiper}),
and the normalizing factor $\sigma$ is defined in~(\ref{general}),
with (\ref{general}) reducing to~(\ref{Bernoulli})
when the responses are Bernoulli variates.
\end{corollary}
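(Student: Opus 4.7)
The plan is to assemble the corollary by a direct composition of the four cited theorems, with the only extra analytical ingredient being the continuous mapping theorem applied to a scalar continuous function.

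For the Kolmogorov-Smirnov claim, I would first invoke Theorem~\ref{kscal} to conclude that, under the stated null hypothesis, the normalized statistic $G/\sigma$ converges in distribution to $M$, the maximum of the absolute value of the standard Brownian motion over $[0,1]$. I would then invoke Theorem~\ref{maincombomax} to identify $D$ from~(\ref{kscdf}) as the cumulative distribution function of $M$. Inspection of~(\ref{kscdf}) shows that $D$ is a uniformly and absolutely convergent series of elementary functions on $(0,\infty)$ (with the convergence controlled by Theorem~\ref{mainboundmax}), hence continuous there. The continuous mapping theorem therefore gives $1 - D(G/\sigma) \to 1 - D(M)$ in distribution. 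Since $D$ is continuous, $D(M)$ is uniformly distributed on $[0,1]$, so $1 - D(G/\sigma)$ is the advertised estimator of the asymptotic P-value and it converges in distribution to $1 - D(M)$.

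The Kuiper claim is established in exactly parallel fashion, using Theorem~\ref{kucal} for convergence in distribution of $H/\sigma$ to the range $R$ of the standard Brownian motion over $[0,1]$, Theorem~\ref{maincomborange} to identify $F$ from~(\ref{cdf}) as the cumulative distribution function of $R$, and continuity of $F$ (transparent either from the series~(\ref{cdf}) itself, with tails controlled by Theorem~\ref{mainboundrange}, or from the integral representation furnished by Theorem~\ref{cdftheorem}). One further application of the continuous mapping theorem then yields $1 - F(H/\sigma) \to 1 - F(R)$ in distribution.

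No step should present any real obstacle: the corollary is essentially the logical juxtaposition of the subsection's earlier results, together with routine convergence statements. The only point that deserves a sentence of care is the appeal to the continuous mapping theorem, which requires continuity of $D$ and of $F$; this same continuity is what ensures that the limiting objects $1 - D(M)$ and $1 - F(R)$ are uniformly distributed on $[0,1]$ under the null, so that the computed quantities genuinely behave as well-calibrated P-values in the large-sample limit.
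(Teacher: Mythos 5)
Your proposal is correct and follows exactly the route the paper intends: the paper offers no explicit proof beyond ``putting everything together,'' and your argument simply makes that composition precise by pairing Theorem~\ref{kscal} with Theorem~\ref{maincombomax} (and Theorem~\ref{kucal} with Theorem~\ref{maincomborange}) via the continuous mapping theorem. Your added remarks on the continuity of $D$ and $F$ and on the resulting uniformity of the limiting P-values are exactly the details the paper leaves implicit.
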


\subsection{Ties in ranking scores can be treated as weighted samples}
\label{reduction}

Subsection~\ref{tygertsummary} above suggests making minute
random perturbations to the scores in order to ensure that the scores
are distinct from each other.
The present subsection proposes an alternative to breaking ties at random.
The present subsection constructs from the original data
a weighted data set that modifies the scores, weights, and responses
such that the new scores are unique and
(together with the new weights and responses) yield cumulative statistics
that are consistent with those computed with the original data.
This reduces the problem of analyzing data with scores that may not be unique
to the problem of analyzing a weighted data set with scores that are unique
by construction. The earlier subsections have already detailed
how to process weighted samples whose scores are all distinct from each other.

The formulation of the present subsection is merely an alternative,
not necessarily superior.
The alternative formulation requires no randomization of the data analysis,
unlike the earlier analyses.
The graphs of the earlier analyses directly displayed all members
of the original data set, omitting no one.
In contrast, for each score that multiple individuals share,
the graphs for the formulation of the present subsection display
only the average of those multiple individuals' responses.
Nevertheless, the corresponding scalar summary statistics
have the same interpretations and asymptotic calibrations of P-values.
Thus, the earlier and new formulations have advantages and disadvantages
relative to each other (though none of the disadvantages is substantial,
admittedly). Both are good options to have available.

The previous subsections directly analyzed only data sets in which the scores
are all unique:
\begin{equation}
\label{unique}
S_1 < S_2 < \dots < S_n,
\end{equation}
where the inequalities are all strict.
The present subsection considers the case in which each score $S_k$
may appear multiple times --- say $n_k$ times --- in the data set.
With this notation of $n_k$ specifying the degeneracy of score $S_k$,
we define $W_k$ to be the sum of all $n_k$ of the original weights
associated with score $S_k$; denoting the original weights
by $W_k^{(1)}$, $W_k^{(2)}$, \dots, $W_k^{(n_k)}$, we thus define
\begin{equation}
\label{newweight}
W_k = \sum_{j=1}^{n_k} W_k^{(j)}
\end{equation}
for $k = 1$, $2$, \dots, $n$.
We define $R_k$ to be the weighted average
of all $n_k$ of the original real-valued responses associated with score $S_k$;
denoting the original responses
by $R_k^{(1)}$, $R_k^{(2)}$, \dots, $R_k^{(n_k)}$, we thus define
\begin{equation}
R_k = \frac{\sum_{j=1}^{n_k} R_k^{(j)} \, W_k^{(j)}}
           {\sum_{j=1}^{n_k} W_k^{(j)}}
\end{equation}
for $k = 1$, $2$, \dots, $n$.
This yields a data set consisting of the weighted sample
$(S_k, R_k, W_k)$ for $k = 1$, $2$, \dots, $n$,
where $S_k$ is the score, $R_k$ is the associated response,
and $W_k$ is the associated weight.
So this new weighted data set contains $n$ members
$(S_k, R_k, W_k)$ for $k = 1$, $2$, \dots, $n$,
whereas the original data set contains $\sum_{k=1}^n n_k$ members
$(S_k^{(j)}, R_k^{(j)}, W_k^{(j)})$ for $k = 1$, $2$, \dots, $n$;
$j = 1$, $2$, \dots, $n_k$.
Analyzing the new weighted data set via the cumulative statistics
is a good way to analyze the original data set.
And, unlike the scores for the original data set,
the scores for the new weighted data set are guaranteed to be unique.

We now show that the cumulative statistics for the original and new data sets
are consistent with each other.

The cumulative differences for the new data are
\begin{equation}
\label{newcumdiffs}
C_{\ell} = \frac{\sum_{k=1}^{\ell} (R_k - r(S_k)) \, W_k}{\sum_{k=1}^n W_k}
\end{equation}
for $\ell = 1$, $2$, \dots, $n$,
where $r$ is the regression function we seek to test;
when testing calibration, the regression function $r$
is simply the identity function $r(s) = s$ for every real number $s$.
When comparing a subpopulation to the full population,
$r(S_k)$ would be the (weighted) average of responses
from the full population at scores that are closer to $S_k$
than to any other of the scores $S_1$, $S_2$, \dots, $S_n$.
We set $C_0 = 0$, too. 

Let us denote by $v(R_k)$ the variance of the response $R_k$
corresponding to the score $S_k$ under the null hypothesis,
where the null hypothesis makes assumptions about the original data directly
(so that inferences about $R_k$ take into account the fact that $R_k$
is a weighted average of other random variables, instead of considering $R_k$
to be a single response variable).
For example, under the null hypothesis of perfect calibration
with each response drawn independently from a Bernoulli distribution,
\begin{equation}
\label{Bernoullivar}
v(R_k) = S_k \, (1 - S_k) \, \frac{\sum_{j=1}^{n_k} \left(W_k^{(j)}\right)^2}
                                  {\left(\sum_{j=1}^{n_k} W_k^{(j)}\right)^2},
\end{equation}
since $S_k \, (1 - S_k)$ is the variance
of the Bernoulli distribution whose expected value is $r(S_k) = S_k$.
Calibration need not be the only hypothesis of interest to test.
Under the null hypothesis that a subpopulation being assessed does not deviate
from the function $r$ for the full population,
an estimate of $v(R_k)$ can be the (weighted) average of variances of responses
from the full population at scores that are closer to $S_k$
than to any other of the scores $S_1$, $S_2$, \dots, $S_n$
(assuming as always that the responses are all independent),
multiplied by the same factor from~(\ref{Bernoullivar}), namely
\begin{equation}
\label{factor}
\frac{\sum_{j=1}^{n_k} \left(W_k^{(j)}\right)^2}
     {\left(\sum_{j=1}^{n_k} W_k^{(j)}\right)^2};
\end{equation}
indeed, the independence of all the responses yields that $v(R_k)$
is equal to the quantity in~(\ref{factor}) times the variance of $R_k^{(j)}$,
for every $j = 1$, $2$, \dots, $n_k$; $k = 1$, $2$, \dots, $n$.
\cite{tygert_full} gives the details.
Since we assumed that the responses are independent, the variance
of $C_{\ell}$ from~(\ref{newcumdiffs}) under the null hypothesis is
\begin{equation}
\label{total}
(\sigma_{\ell})^2 = \frac{\sum_{k=1}^{\ell} v(R_k) \cdot (W_k)^2}
                         {(\sum_{k=1}^n W_k)^2}
\end{equation}
for $\ell = 1$, $2$, \dots, $n$.

We also consider similar cumulative differences for the original data set
in which the scores are perturbed infinitesimally at random
(so that the scores become unique):
\begin{equation}
\label{cumdiffs}
B_{\ell} = \frac{\sum_{k=1}^{\ell} \sum_{j=1}^{n_k}
                 \left(R_k^{(j)} - r(S_k)\right) \, W_k^{(j)}}
                {\sum_{k=1}^n \sum_{j=1}^{n_k} W_k^{(j)}}
         = \frac{\sum_{k=1}^{\ell} (R_k - r(S_k)) \sum_{j=1}^{n_k} W_k^{(j)}}
                {\sum_{k=1}^n \sum_{j=1}^{n_k} W_k^{(j)}}
\end{equation}
for $\ell = 1$, $2$, \dots, $n$,
where the ordering of $R_k^{(1)}$, $R_k^{(2)}$, \dots, $R_k^{(n_k)}$
(and the corresponding weights) is randomized
for each $k = 1$, $2$, \dots, $n$. We set $B_0 = 0$, too.

We define abscissae via the aggregations
\begin{equation}
A_{\ell} = \frac{\sum_{k=1}^{\ell} \sum_{j=1}^{n_k} W_k^{(j)}}
                {\sum_{k=1}^n \sum_{j=1}^{n_k} W_k^{(j)}}
         = \frac{\sum_{k=1}^{\ell} W_k}{\sum_{k=1}^n W_k}
\end{equation}
for $\ell = 1$, $2$, \dots, $n$, where the latter equality
follows from~(\ref{newweight}). We set $A_0 = 0$, too.
Combining~(\ref{newweight}), (\ref{newcumdiffs}), and~(\ref{cumdiffs})
shows that $B_{\ell} = C_{\ell}$
for all $\ell = 1$, $2$, \dots, $n$.
Therefore, the piecewise linear graph connecting
the points $(A_{\ell},\,B_{\ell} / \sigma_n)$
for $\ell = 0$, $1$, $2$, \dots, $n$
and the piecewise linear graph connecting
the points $(A_{\ell},\,C_{\ell} / \sigma_n)$
for $\ell = 0$, $1$, $2$, \dots, $n$ are the same.
This demonstrates that the cumulative statistics for the original
and new data sets are consistent with each other.
Indeed, the corresponding graph of cumulative differences
for the original data with its scores perturbed very slightly
(so that the scores become unique) is the same
aside from the other graphs linearly interpolating
from each score $S_k$ to the next greatest score, $S_{k+1}$,
rather than interpolating linearly from each and every perturbed score
to the next greatest perturbed score.

Theorems~\ref{kscal} and~\ref{kucal} and their Corollary~\ref{corollary}
yield the same consequences for all cumulative statistics considered here,
under the condition (expressed in the notation of the present subsection):
\begin{equation}
\label{complicated}
\frac{\max_{1 \le k \le n}
      \left[ v(R_k) \cdot \frac{\left(\sum_{j=1}^{n_k} W_k^{(j)}\right)^2}
                               {\sum_{j=1}^{n_k} \left(W_k^{(j)}\right)^2}
      \cdot \max_{1 \le j \le n_k} \left(W_k^{(j)}\right)^2 \right]}
     {\sum_{k=1}^n
      \left[ v(R_k) \cdot \left(\sum_{j=1}^{n_k} W_k^{(j)}\right)^2 \right]}
\end{equation}
converges to 0 in the limit as $n$ becomes large.
The denominator in~(\ref{complicated}) is simply the numerator
in~(\ref{complicated}) after replacing the maximizations with sums.

To summarize: the cumulative statistics for the original data set
can require perturbing the scores slightly in order to break degeneracies,
unlike the cumulative statistics for the new weighted data.
The randomization does preserve more information about the original data,
as the associated graph of cumulative differences displays
the response of every single individual from the original data set.
The new weighted data set instead avoids any randomization
but, for each score that multiple members share,
averages together the multiple members' responses.
Thus both the previous approaches and that of the present subsection
have pros and cons relative to each other.
That said, the approaches are more similar than different;
neither has any substantial drawback.

\section{Results}
\label{results}

The present section illustrates (via examples and plots)
the numerical and graphical methods of the preceding section.\footnote{Software
in Python 3 that automatically reproduces all results and figures reported
in the present section is available
at \url{https://github.com/facebookresearch/cdeets}}
Subsection~\ref{validation} verifies the methods numerically,
double-checking the rigorous proofs given earlier.
Subsection~\ref{analysis} applies the methods to a popular data set
from the U.S.\ Census Bureau.

\subsection{Numerical validation}
\label{validation}

This subsection presents numerical verifications of the methods
of the preceding section. The numerical validation is purely supplemental,
as the proofs given earlier are complete on their own. The numerical results
are nice and concrete, possibly easier to digest than the detailed proofs.

Figures~\ref{kuiper_plot} and~\ref{kolmogorov-smirnov_plot}
plot $1 - F(x)$ versus $x$ and $1 - D(x)$ versus $x$,
respectively, where $F$ is defined in~(\ref{cdf})
and $D$ is defined in~(\ref{kscdf}).
The calculation for $F$ truncates the series in~(\ref{cdf}) after $n$ terms,
where $n = n(x)$ is the least integer such that~(\ref{kuterms})
guarantees full double-precision accuracy (with $\epsilon \approx$ 2.2E--16).
Similarly, the calculation for $D$ truncates the series in~(\ref{kscdf})
after $n$ terms, where $n = n(x)$ is the least integer
such that~(\ref{ksterms}) guarantees full double-precision accuracy
(again with $\epsilon \approx$ 2.2E--16).
To give an indication of how another sub-Gaussian distribution decays,
Figure~\ref{gaussian_plot} plots $1 - \Phi(x)$ versus $x$,
where $\Phi$ is the cumulative distribution function
for the standard normal distribution.

Formula~1.4 of~\citet{feller} and Formula~46 of~\citet{masoliver} give
the means of the distributions associated
with the cumulative distribution functions
$F$ and $D$ defined in~(\ref{cdf}) and~(\ref{kscdf}),
as $2 \sqrt{2/\pi} \approx 1.5958$
and $\sqrt{\pi/2} \approx 1.2533$, respectively.
The horizontal positions of the vertical dotted lines labeled ``mean''
in Figures~\ref{kuiper_plot} and~\ref{kolmogorov-smirnov_plot}
are at these mean values.
A unit test of the implementations of the cumulative distribution functions
is to numerically evaluate the means.
Using a Gauss-Chebyshev quadrature of order 100,000 to integrate
$1 - F(x)$ and $1 - D(x)$ from $x =$ 1E--8 to $x = 8$
yields the correct means to better than 8-digit relative accuracy
in the implemented codes, thus passing this unit test.

Figures~\ref{kuc} and~\ref{ksc} plot the calibration curves
for the Kuiper and Kolmogorov-Smirnov statistics, respectively.
The calibration curves are the empirical cumulative distribution functions
of the asymptotic P-values for calibration calculated
for 100,000 data sets generated
by drawing independent Bernoulli responses at the scores,
with the probability of success in the Bernoulli distribution
being exactly equal to the score (so that the data is perfectly calibrated,
by construction). Perfectly calibrated P-values would follow
the uniform distribution over the unit interval $[0, 1]$
under the null hypothesis, and so ideally the plotted
empirical cumulative distribution functions should approach
the cumulative distribution function for the uniform distribution
as the sample size increases. The cumulative distribution function
for the uniform distribution over the unit interval $[0, 1]$
is the line connecting the origin $(0, 0)$ to the point $(1, 1)$;
each plot displays a dashed line to indicate the ideal calibration curve.
The other curves are the empirical cumulative distribution functions
of the P-values for data sets with sample sizes $n =$ 100, 1,000, 10,000;
as expected, the curve closest to the diagonal dashed line in each plot 
is that for $n =$ 10,000, the next closest is for $n =$ 1,000,
and the farthest is for $n =$ 100.
The weights in these synthetically generated data sets are uniform (all equal),
just for simplicity.

Figures~\ref{kuc} and~\ref{ksc} illustrate Corollary~\ref{corollary},
with convergence to the ideal calibration
that~\citet{delgado}, \citet{diebolt}, and~\citet{stute}
prove as the scores become dense in the unit interval $[0, 1]$
(the scores are quite dense already with $n =$ 10,000, for example).
Notice that the empirical curves all lie entirely below
the diagonal dashed line, in accordance
with the calculated finite-sample P-values being conservatively calibrated
(the P-values are not smaller on average than expected).

The ends of the captions of Figures~\ref{la}, \ref{stanislaus}, and~\ref{napa}
from the following subsection report P-values evaluated
using Corollary~\ref{corollary}.
Attained significance levels (P-values) for all methods of~\citet{tygert_full}
can also be calibrated and calculated directly using Corollary~\ref{corollary},
under the assumption that, for each of the scores from the subpopulation,
the full population contains many members whose scores are closer to the score
from the subpopulation than to other scores from the subpopulation;
if this assumption is invalid, then the statistics fed
into the cumulative distribution functions require adjustment
to account for the additional stochasticity,
as described by~\citet{tygert_full} and~\citet{tygert_two}.

\subsection{Data analysis}
\label{analysis}

This subsection illustrates the methods of the preceding section
by applying the methods to the microdata of the U.S.\ Census Bureau's
2019 American Community Survey.\footnote{The microdata
from the American Community Survey is available for download
via the FTP servers and other means detailed
at \url{https://www.census.gov/programs-surveys/acs/microdata.html}}
We discard every member of the data set for which the weight
(``WGTP'' in the microdata) for the weighted sampling is zero,
as well as every member for which household personal income (``HINCP'') is zero
and every member for which the adjustment factor to income (``ADJINC'')
is reported as missing.
The scores are the logarithm to base 10
of the adjusted household personal income
(the adjusted income is ``HINCP'' times ``ADJINC,'' divided by one million;
the one million accounts for the omission of any decimal point in ``ADJINC''
--- the microdata is integer-valued).
The responses are the variables from the data set specified
in the captions to the figures for this subsection,
namely Figures~\ref{la}--\ref{napa}
(different figures analyze different response variables).
The full population in the survey consists of 134,094 households,
a weighted sample of California.
The subpopulation being compared to the full population
consists of the households in the county specified in the caption
to the corresponding figure.

\section{Discussion and conclusion}
\label{conclusion}

As shown above, the combination of~\citet{feller}, \citet{darling-siegert},
\citet{delgado}, \citet{diebolt}, \citet{stute}, and others trivially yields
computationally efficient and convenient calibration of P-values
for the metrics of~\citet{tygert_full}, metrics very similar
to those of~\citet{kolmogorov} and~\citet{smirnov} and of~\citet{kuiper}
(whose work directly stimulated all the others', including that
of the author of the present paper).
The results of~\citet{delgado}, \citet{diebolt}, and~\citet{stute}
reduce the problem of calibration to the calculation
of the distributions of the range and of the maximum absolute value
of the standard Brownian motion over the unit interval $[0, 1]$;
the results of~\citet{feller} and~\citet{darling-siegert}
completely characterize those distributions.
Simple, straightforward manipulation of the resulting formulae
then yields the cumulative distribution functions required
for calibrating P-values, as detailed in Section~\ref{methods} above.
Section~\ref{methods} also presents two different approaches for processing
data sets in which the scores are not all distinct from each other.
In all cases, implementation is easy; Section~\ref{results} validates
the numerical methods and implementation via plots
of the cumulative distribution functions of the metrics
and of the associated P-values, as well as via checks
against analytic, closed-form expressions, illustrating use of the codes
both on their own and as applied to both real and synthetic data sets.
The software is ready for widespread use
under its permissive MIT copyright license.

\newlength{\imsize}
\setlength{\imsize}{.48\textwidth}
\newlength{\imsized}
\setlength{\imsized}{.59\textwidth}

\begin{figure}
\begin{center}
\parbox{\imsize}{\includegraphics[width=\imsize]{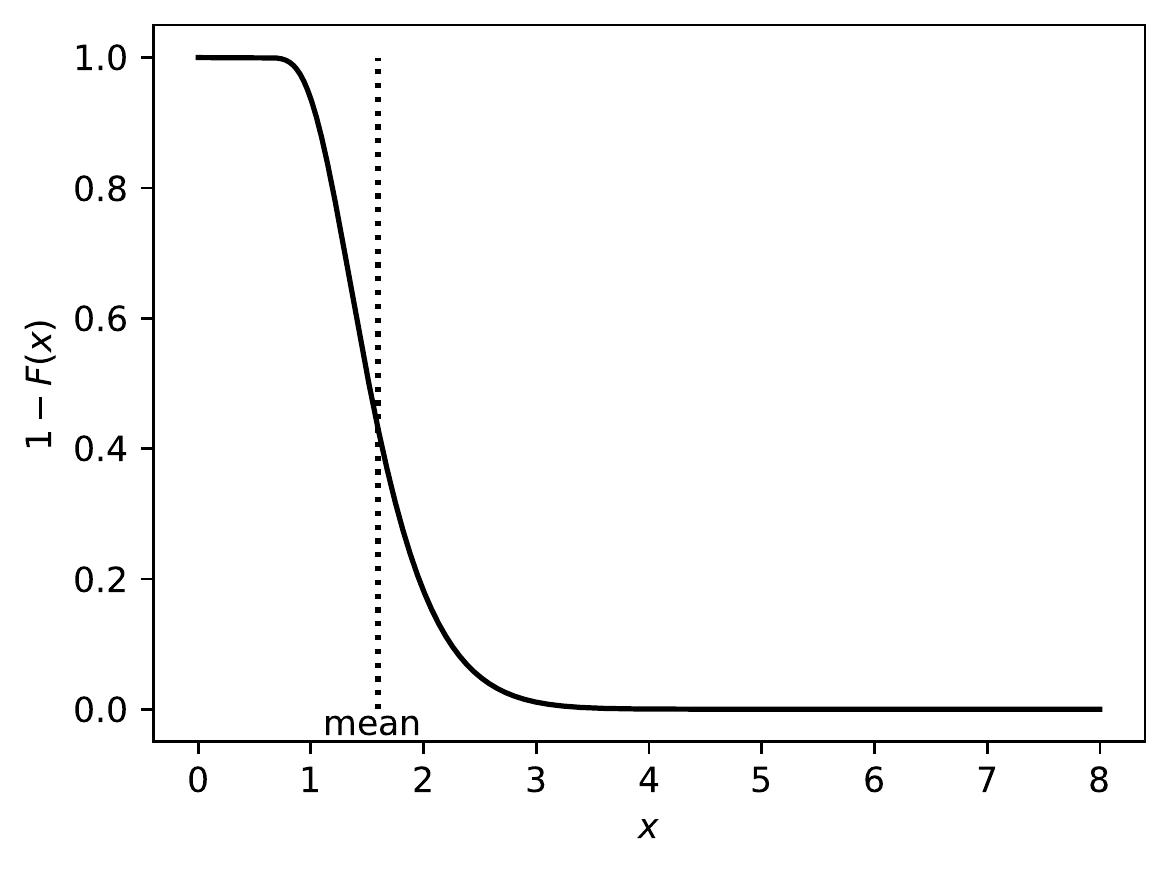}}
\quad
\parbox{\imsize}{\includegraphics[width=\imsize]{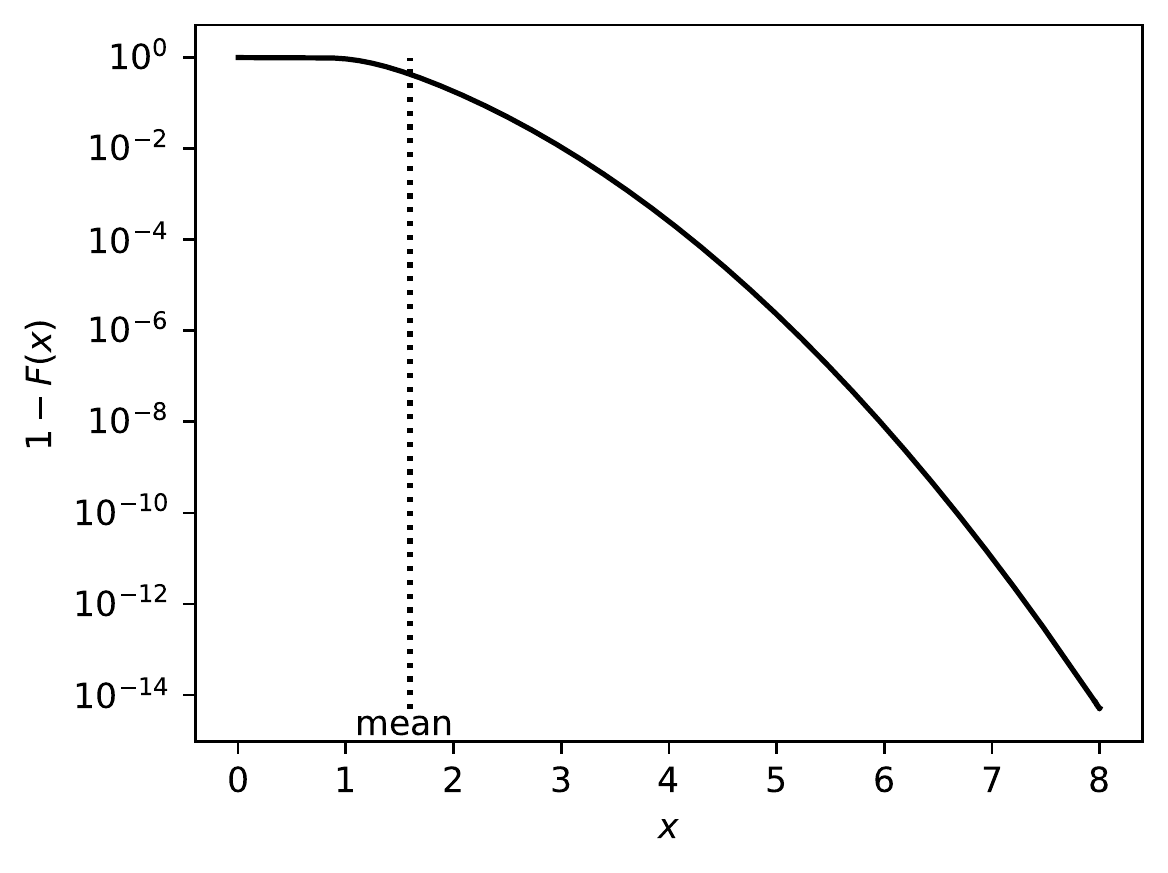}}
\end{center}
\vspace{-.125in}
\caption{Both plots graph $1 - F(x)$ versus $x$, where
$F$ is defined in~(\ref{cdf}) and is central to Corollary~\ref{corollary}.
The plot on the right uses a logarithmic scale for the vertical axis,
unlike the plot on the left. The vertical dotted line indicates
the value of $x$ corresponding to the mean of the distribution
for which $F$ is the cumulative distribution function.}
\label{kuiper_plot}
\end{figure}

\begin{figure}
\begin{center}
\parbox{\imsize}{\includegraphics[width=\imsize]
                {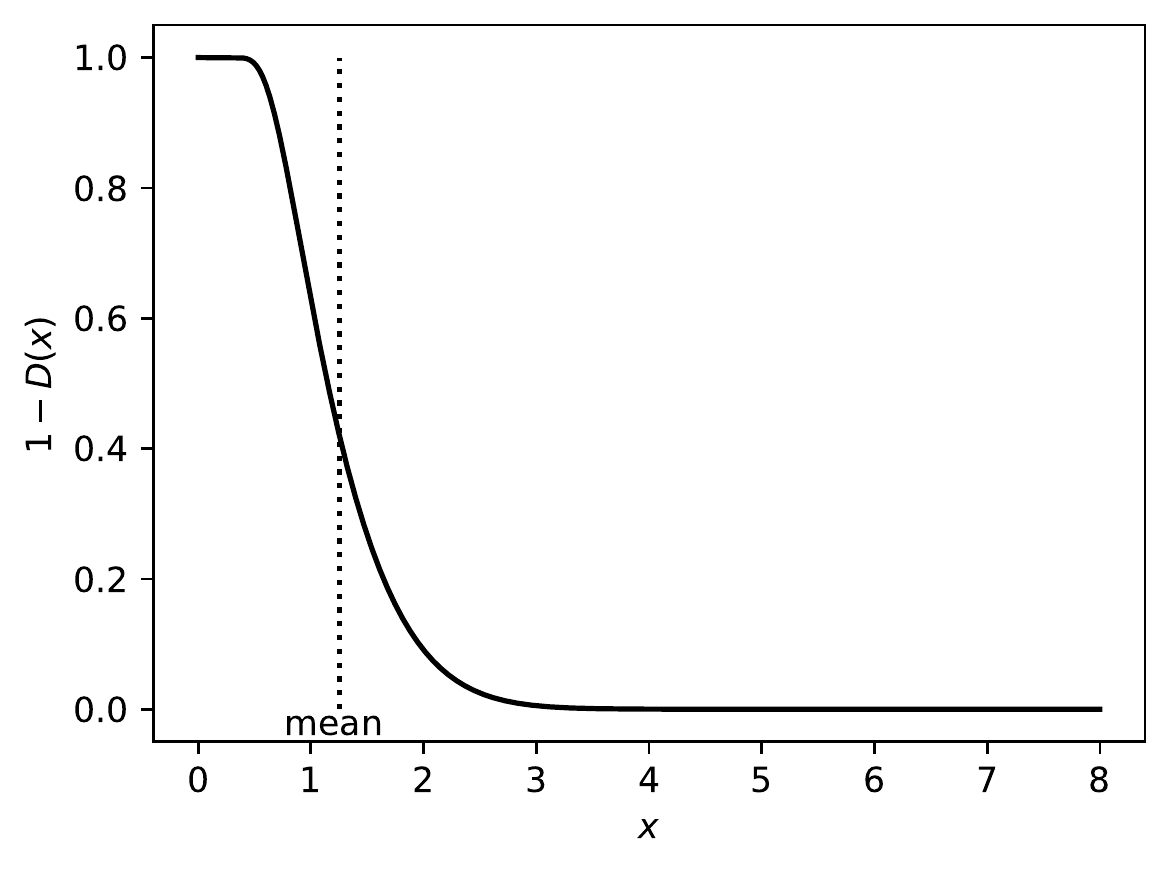}}
\quad
\parbox{\imsize}{\includegraphics[width=\imsize]
                {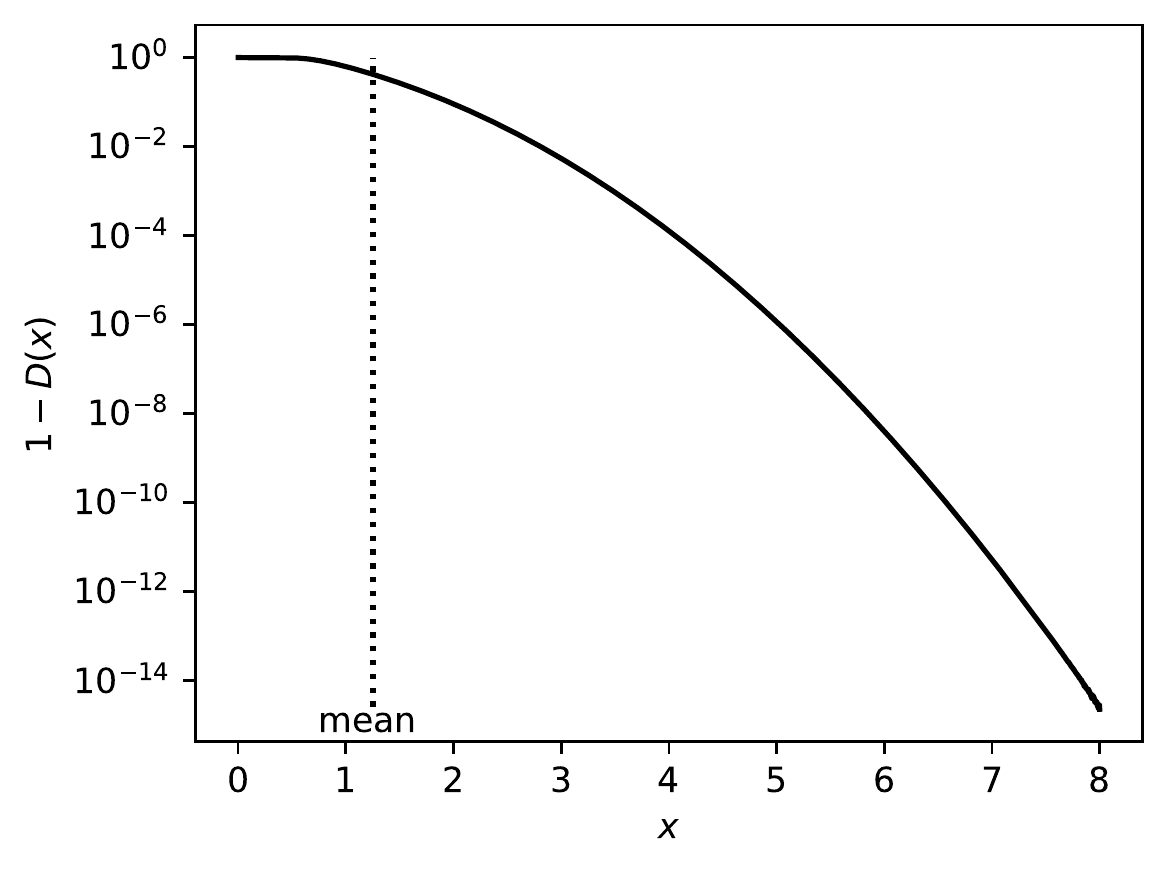}}
\end{center}
\vspace{-.125in}
\caption{Both plots graph $1 - D(x)$ versus $x$, where
$D$ is defined in~(\ref{kscdf}) and is central to Corollary~\ref{corollary}.
The plot on the right uses a logarithmic scale for the vertical axis,
unlike the plot on the left. The vertical dotted line indicates
the value of $x$ corresponding to the mean of the distribution
for which $D$ is the cumulative distribution function.}
\label{kolmogorov-smirnov_plot}
\end{figure}

\begin{figure}
\begin{center}
\parbox{\imsize}{\includegraphics[width=\imsize]{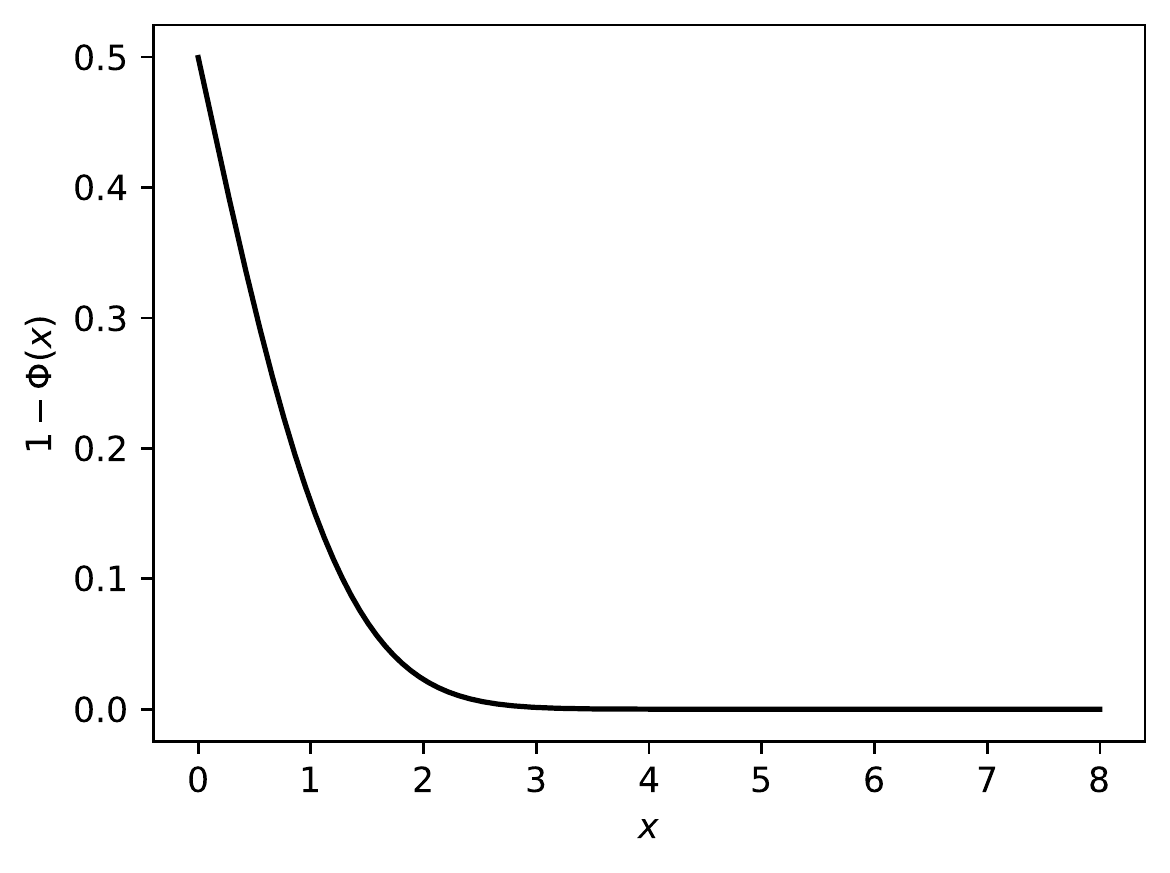}}
\quad
\parbox{\imsize}{\includegraphics[width=\imsize]{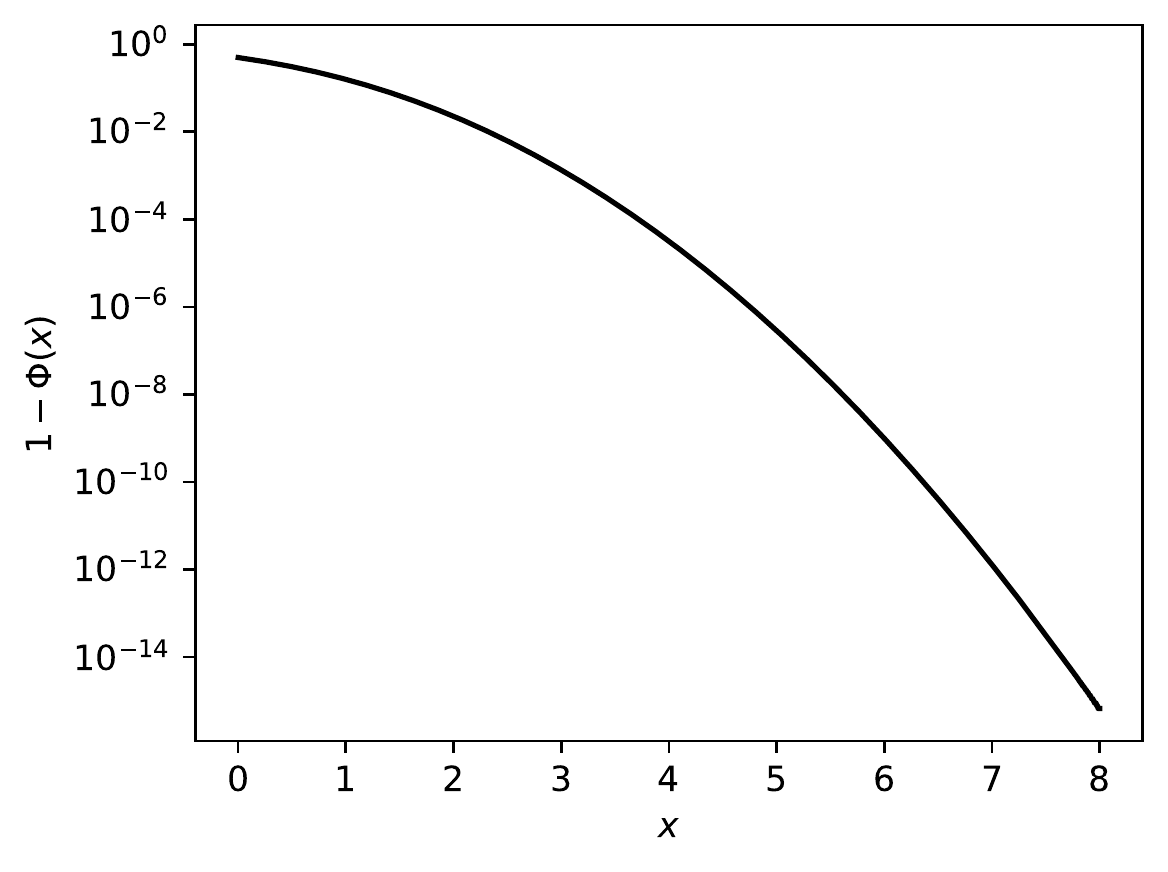}}
\end{center}
\vspace{-.125in}
\caption{Both plots graph $1 - \Phi(x)$ versus $x$,
where $\Phi$ is the cumulative distribution function
for the standard normal distribution;
$\Phi(x) = \int_{-\infty}^x \exp(-y^2/2) \, dy \, / \, \sqrt{2\pi}$.
The plot on the right uses a logarithmic scale for the vertical axis,
unlike the plot on the left.}
\label{gaussian_plot}
\end{figure}

\begin{figure}
\begin{center}
\quad \quad (a)

\parbox{\imsize}{\includegraphics[width=\imsize]
       {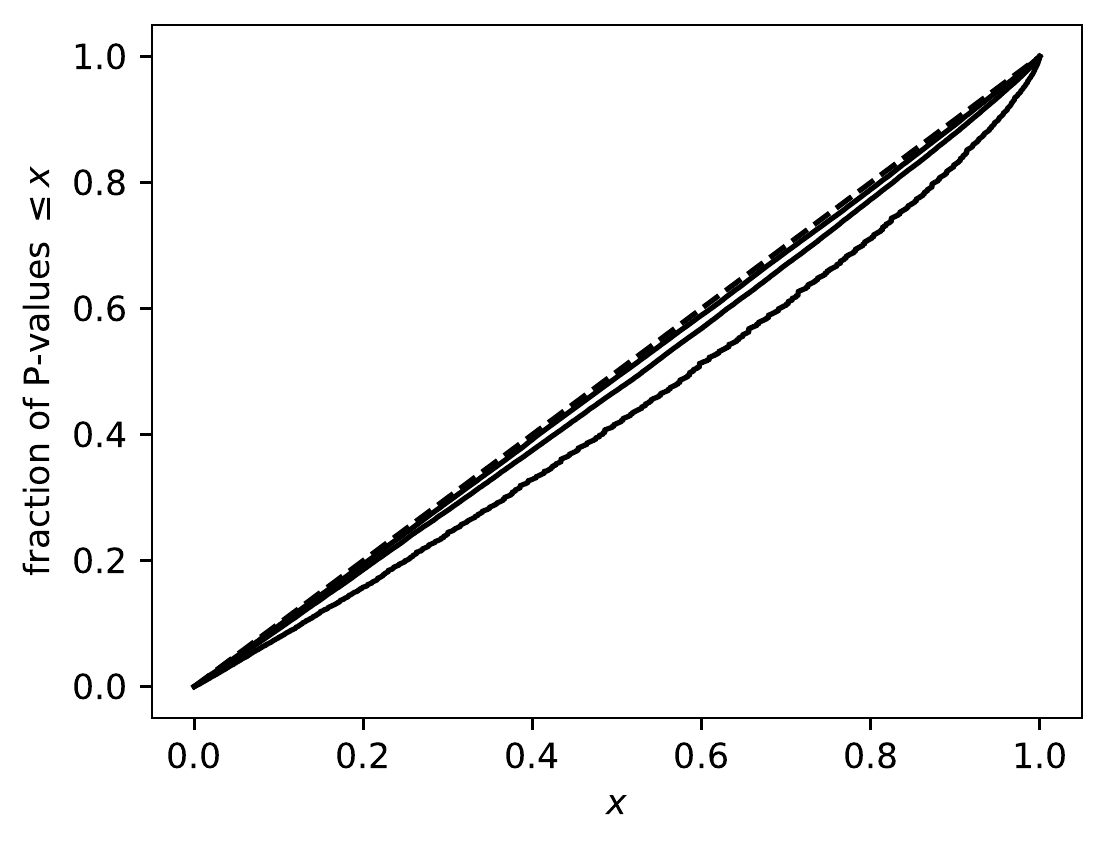}}

\ \quad \quad \hfil (b) \hfil \hfil \hfil \hfil \quad \quad \quad (c)
\hfil \hfil

\parbox{\imsize}{\includegraphics[width=\imsize]
       {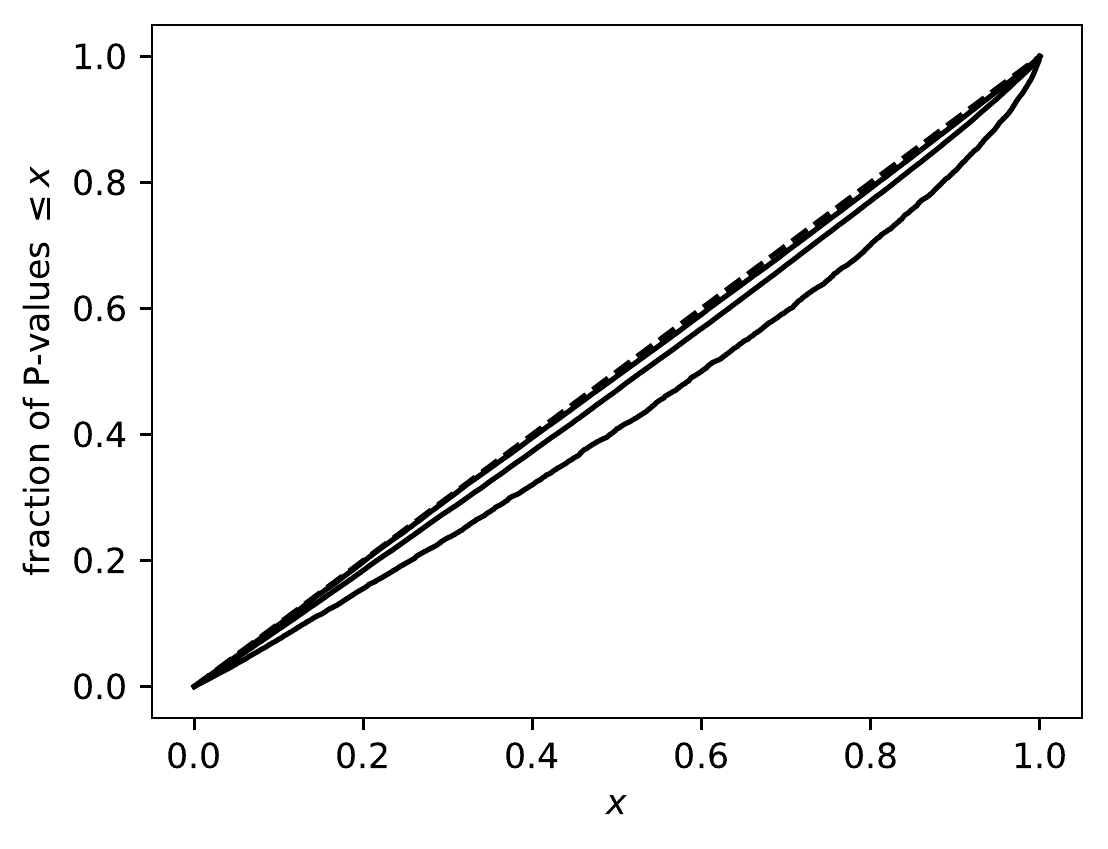}}
\quad
\parbox{\imsize}{\includegraphics[width=\imsize]
       {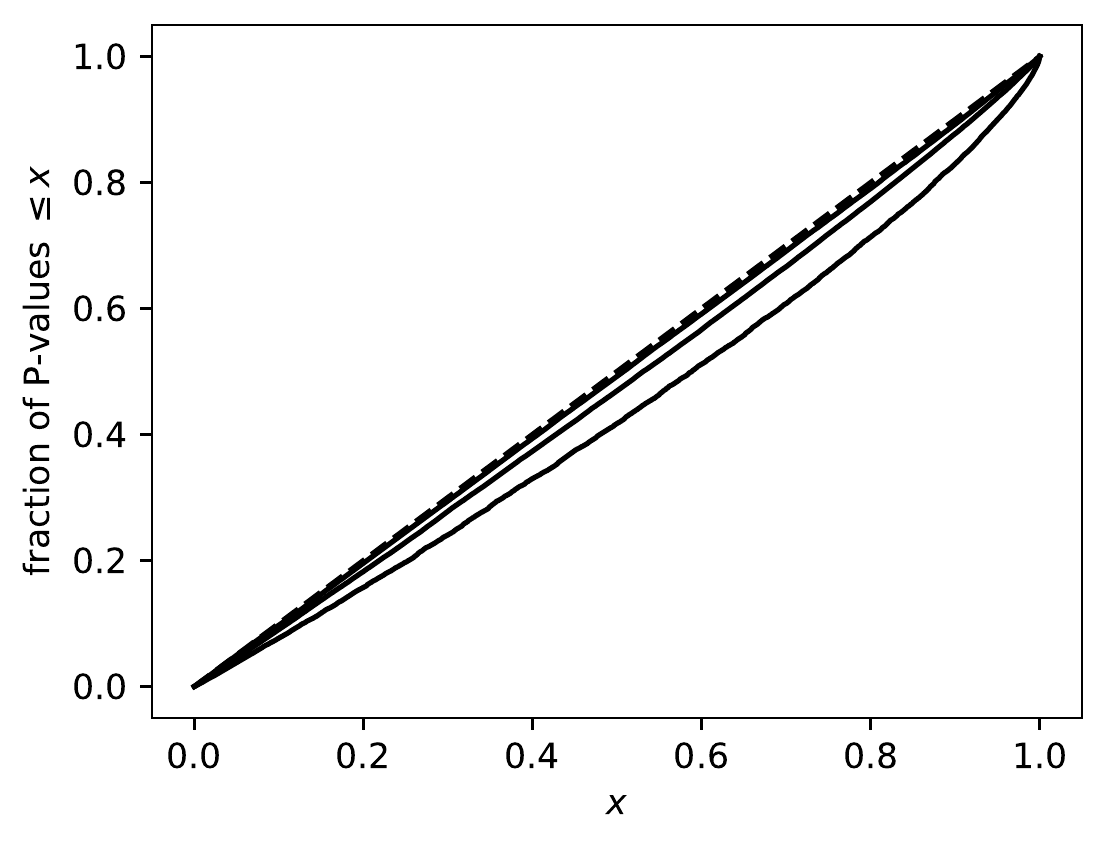}}
\end{center}
\vspace{-.125in}
\caption{Calibration curves (empirical cumulative distribution functions
under the null hypothesis of perfectly calibrated data) of the Kuiper P-value
for calibration for sample sizes $n =$ 100, 1,000, 10,000; in each plot,
the dashed line connects the origin $(0, 0)$ to the point $(1, 1)$
and illustrates perfect calibration, while the curve for $n =$ 10,000
is closest to perfect, $n =$ 1,000 is next closest, and $n =$ 100
is the farthest. Subfigure (a) uses scores equispaced
on the unit interval $[0, 1]$,
(b) squares each of the initially equispaced scores,
and (c) takes the square root of each of the initially equispaced scores.
The score $s$ is the predicted probability,
with the expected response $r(s) = s$ to assess calibration.
Each empirical cumulative distribution function plotted arises
from 100,000 data sets generated independently
while assuming the null hypothesis.}
\label{kuc}
\end{figure}

\begin{figure}
\begin{center}
\quad \quad (a)

\parbox{\imsize}{\includegraphics[width=\imsize]
       {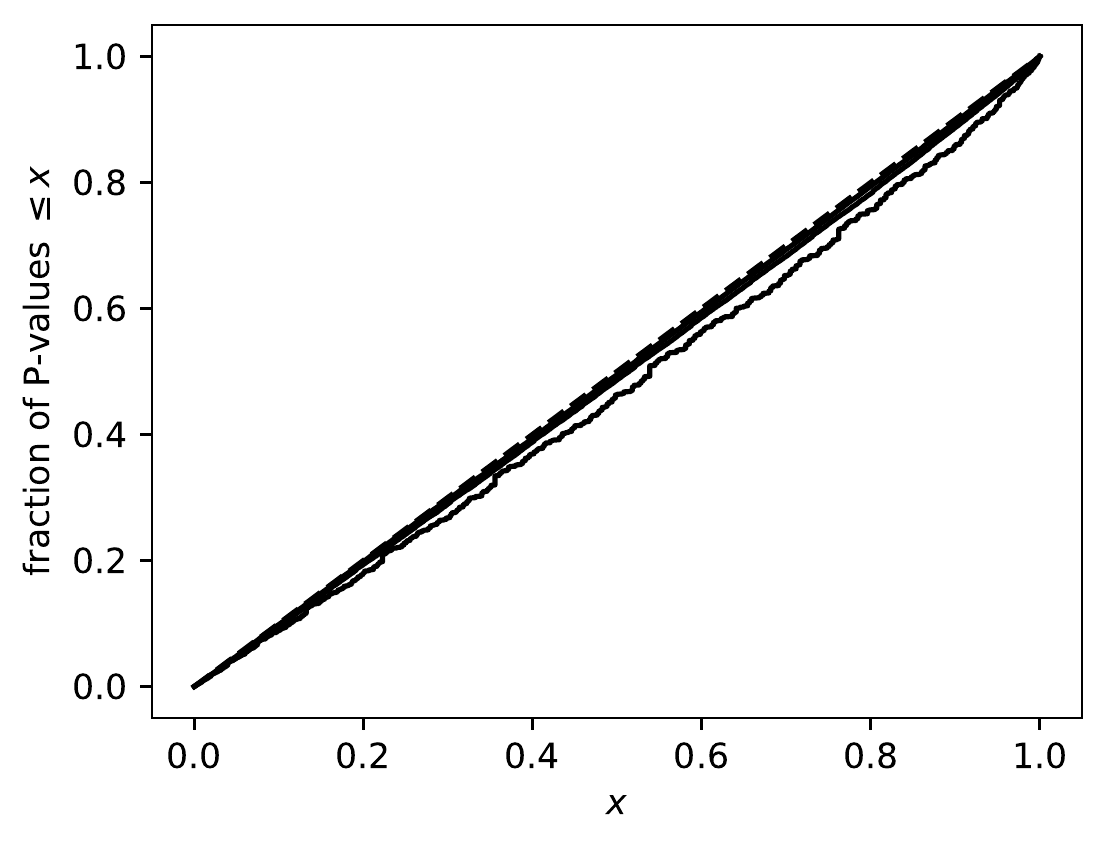}}

\ \quad \quad \hfil (b) \hfil \hfil \hfil \hfil \quad \quad \quad (c)
\hfil \hfil

\parbox{\imsize}{\includegraphics[width=\imsize]
       {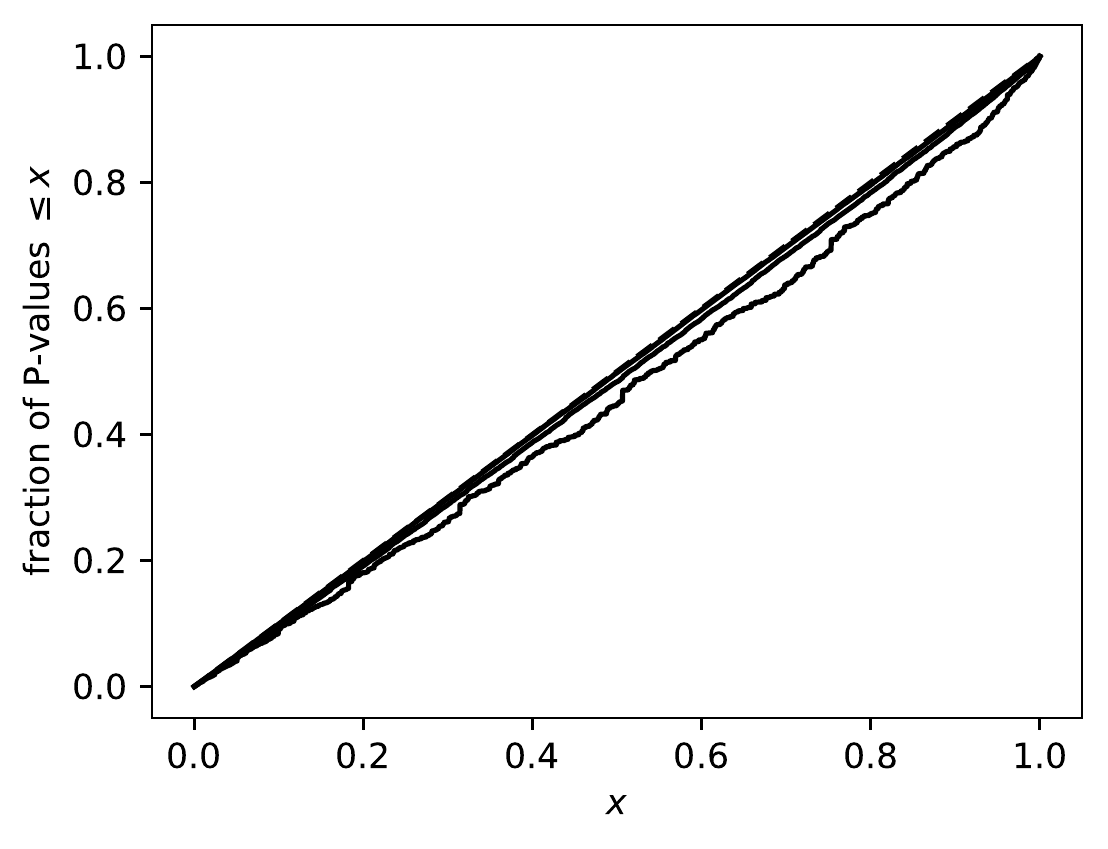}}
\quad
\parbox{\imsize}{\includegraphics[width=\imsize]
       {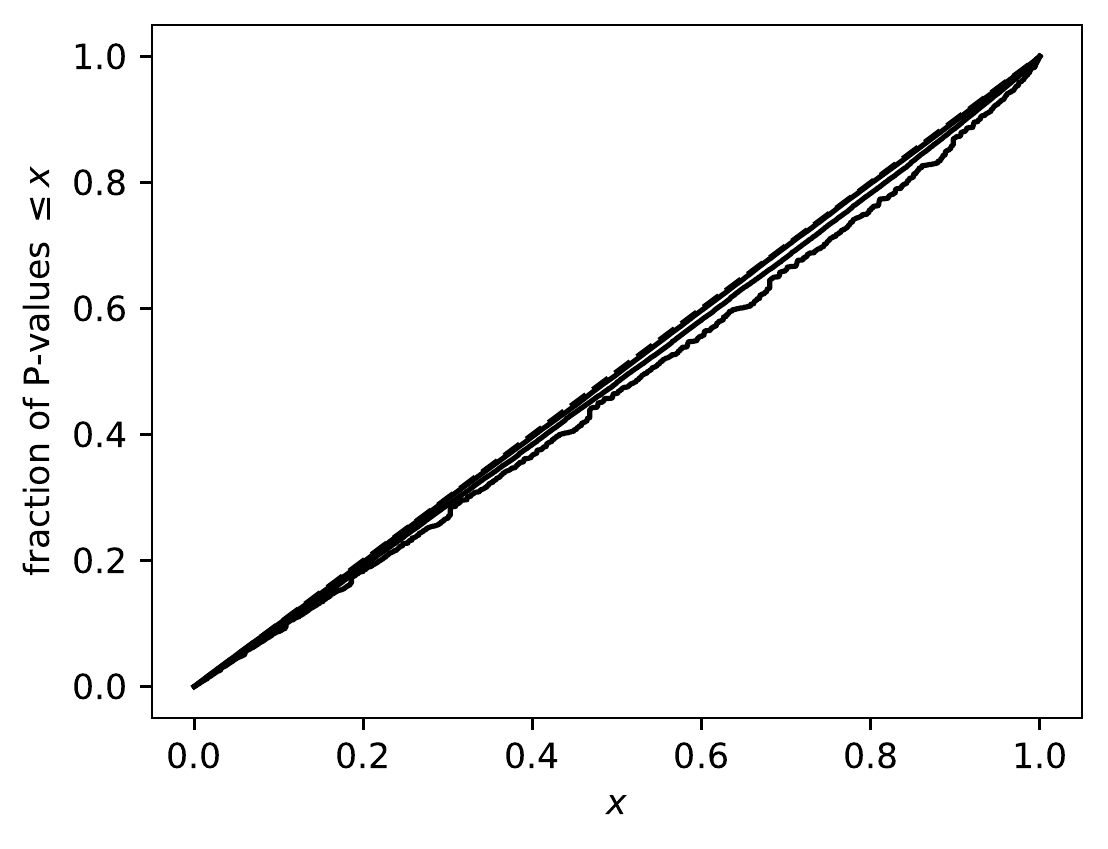}}
\end{center}
\vspace{-.125in}
\caption{Calibration curves (empirical cumulative distribution functions
under the null hypothesis of perfectly calibrated data)
of the Kolmogorov-Smirnov P-value for calibration
for sample sizes $n =$ 100, 1,000, 10,000; in each plot, the dashed line
connects the origin $(0, 0)$ to the point $(1, 1)$ and illustrates
perfect calibration, while the curve for $n =$ 10,000 is closest to perfect,
$n =$ 1,000 is next closest, and $n =$ 100 is the farthest.
Subfigure (a) uses scores equispaced on the unit interval $[0, 1]$,
(b) squares each of the initially equispaced scores,
and (c) takes the square root of each of the initially equispaced scores.
The score $s$ is the predicted probability,
with the expected response $r(s) = s$ to assess calibration.
Each empirical cumulative distribution function plotted arises
from 100,000 data sets generated independently
while assuming the null hypothesis.}
\label{ksc}
\end{figure}

\begin{figure}
\begin{center}
\includegraphics[width=\imsized]
{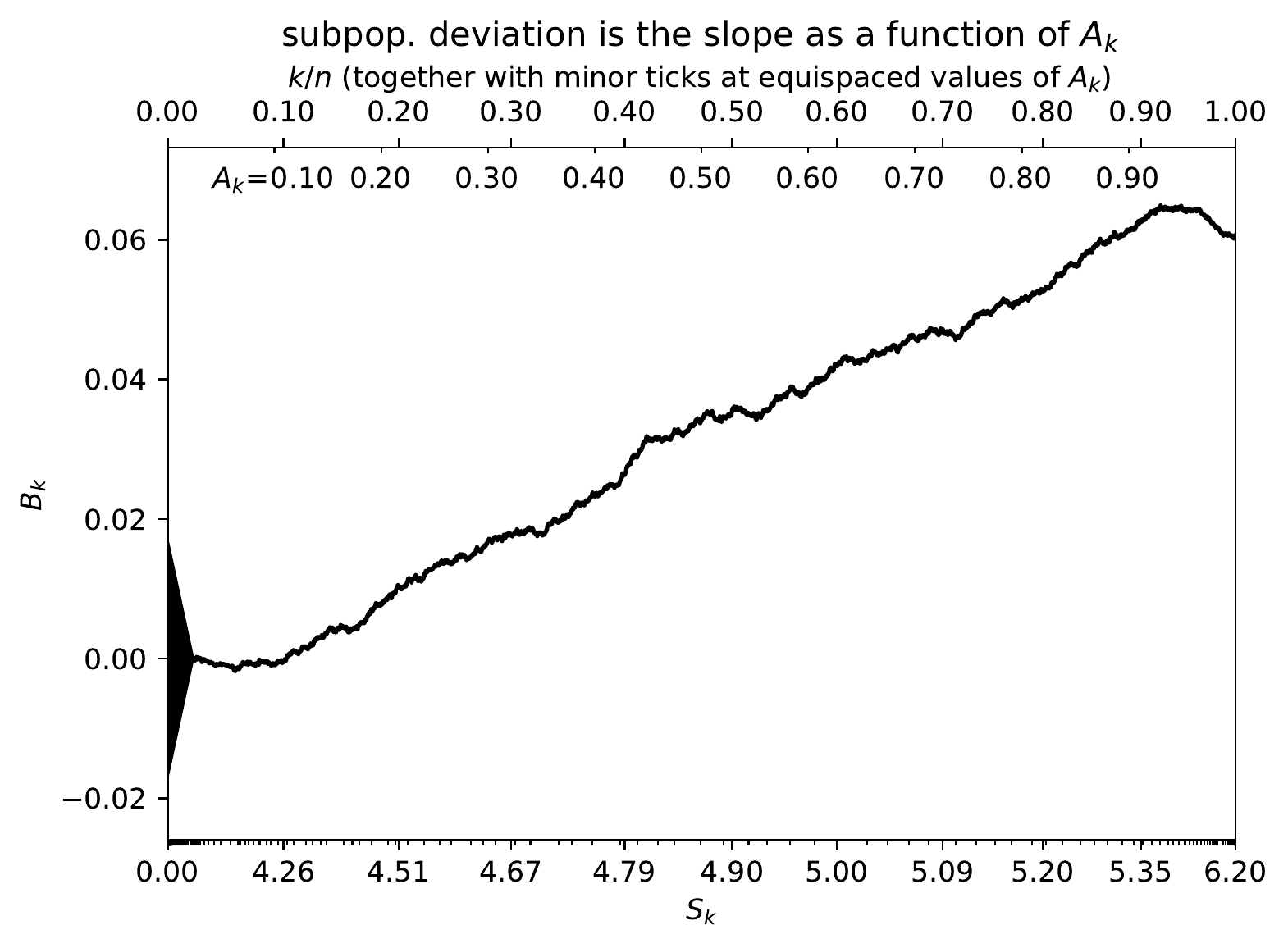}

\vspace{-.5em}

\ \ ($n =$ 35,364)

\

\includegraphics[width=\imsized]
{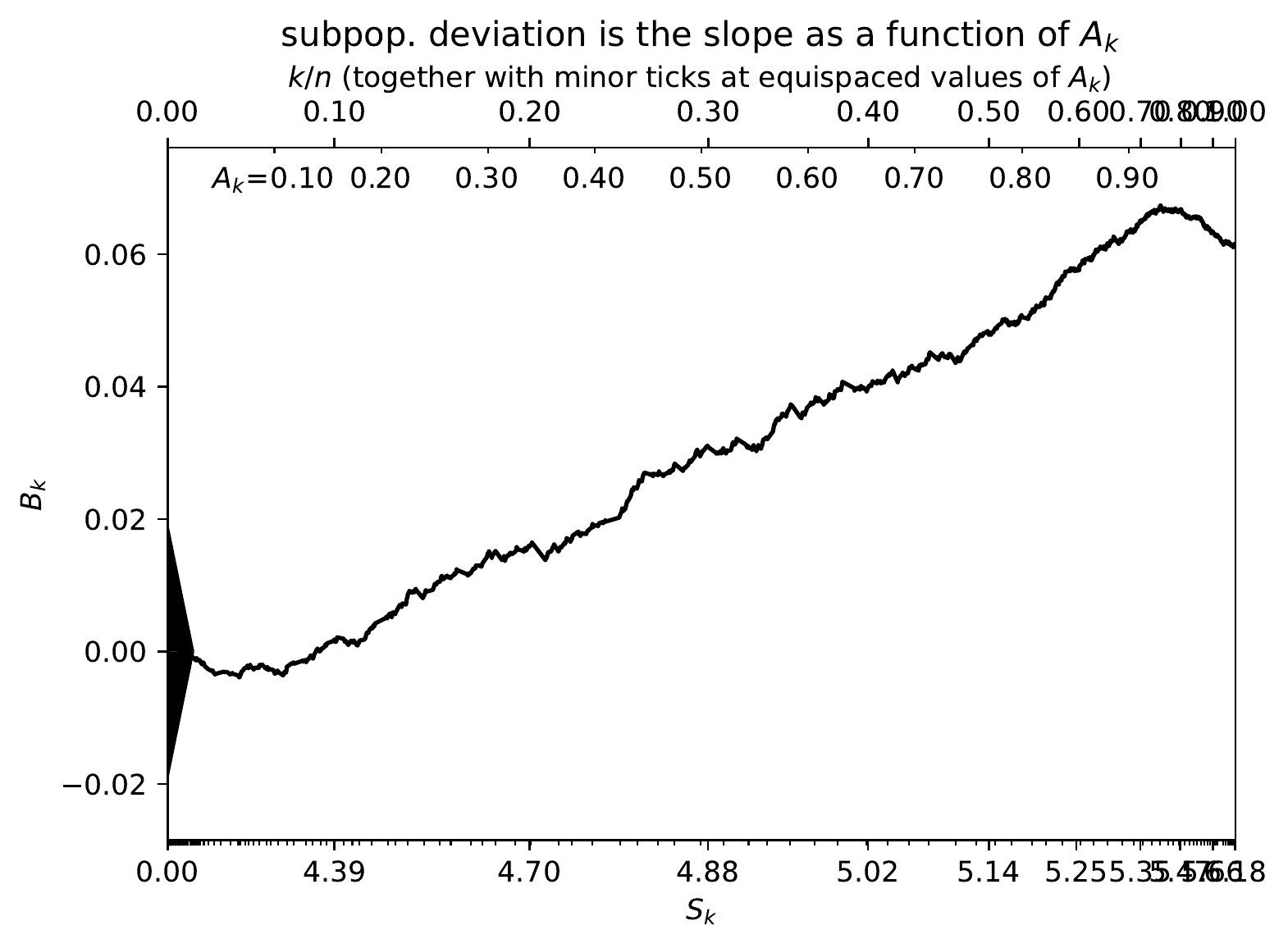}

\vspace{-.5em}

\ \ \ \ \ ($n =$ 5,587)
\end{center}
\vspace{-.125in}
\caption{
Difference in the number of people in a household
between the county of Los Angeles and the entire state of California
(the county is the subpopulation, while the state is the full population).
The scores indicated along the lower horizontal axis are $\log_{10}$
of the adjusted household income, randomly perturbed in the upper plot
by about one part in a hundred million to ensure their uniqueness.
There are 35,364 households representing Los Angeles.
When the scores are perturbed at random ($n =$ 35,364),
Kuiper's statistic $H = 0.06674$, while $H/\sigma = 7.521$;
Kolmogorov's and Smirnov's $G = 0.06495$, while $G/\sigma = 7.319$.
When the responses are averaged for the same score
as in Subsection~\ref{reduction} and displayed in the lower plot ($n =$ 5,587),
Kuiper's statistic $H = 0.07126$, while $H/\sigma_n = 7.213$;
Kolmogorov's and Smirnov's $G = 0.06736$, while $G/\sigma_n = 6.818$.
The P-values for both statistics are 0 to the precision of computations.
These P-values reflect the observed difference of many standard deviations
beyond the expected means.
Deviation of the subpopulation's response (the number of people)
from the full population's is the slope as displayed.}
\label{la}
\end{figure}

\begin{figure}
\begin{center}
\includegraphics[width=\imsized]
{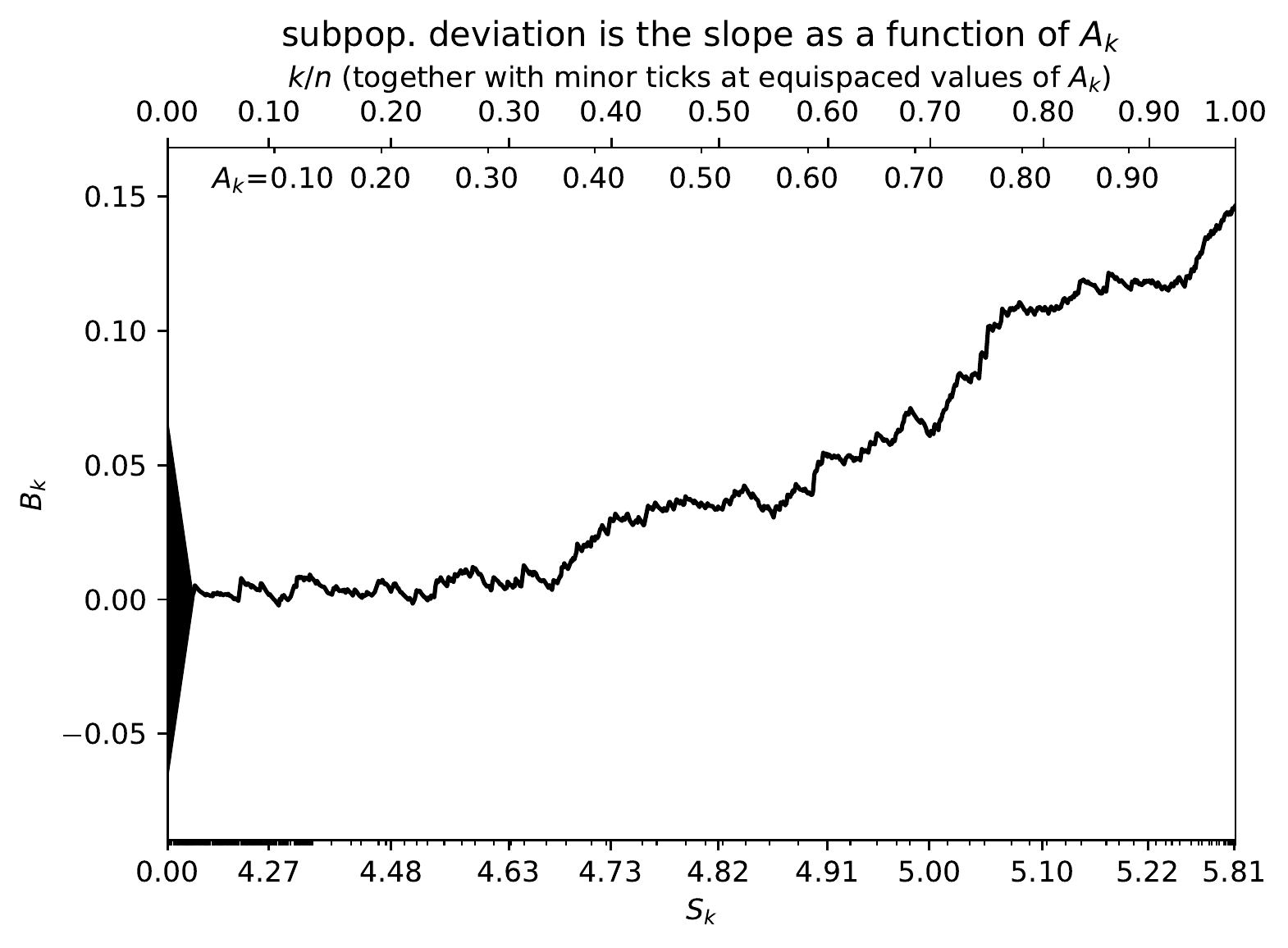}

\vspace{-.5em}

\ \ ($n =$ 1,624)

\

\includegraphics[width=\imsized]
{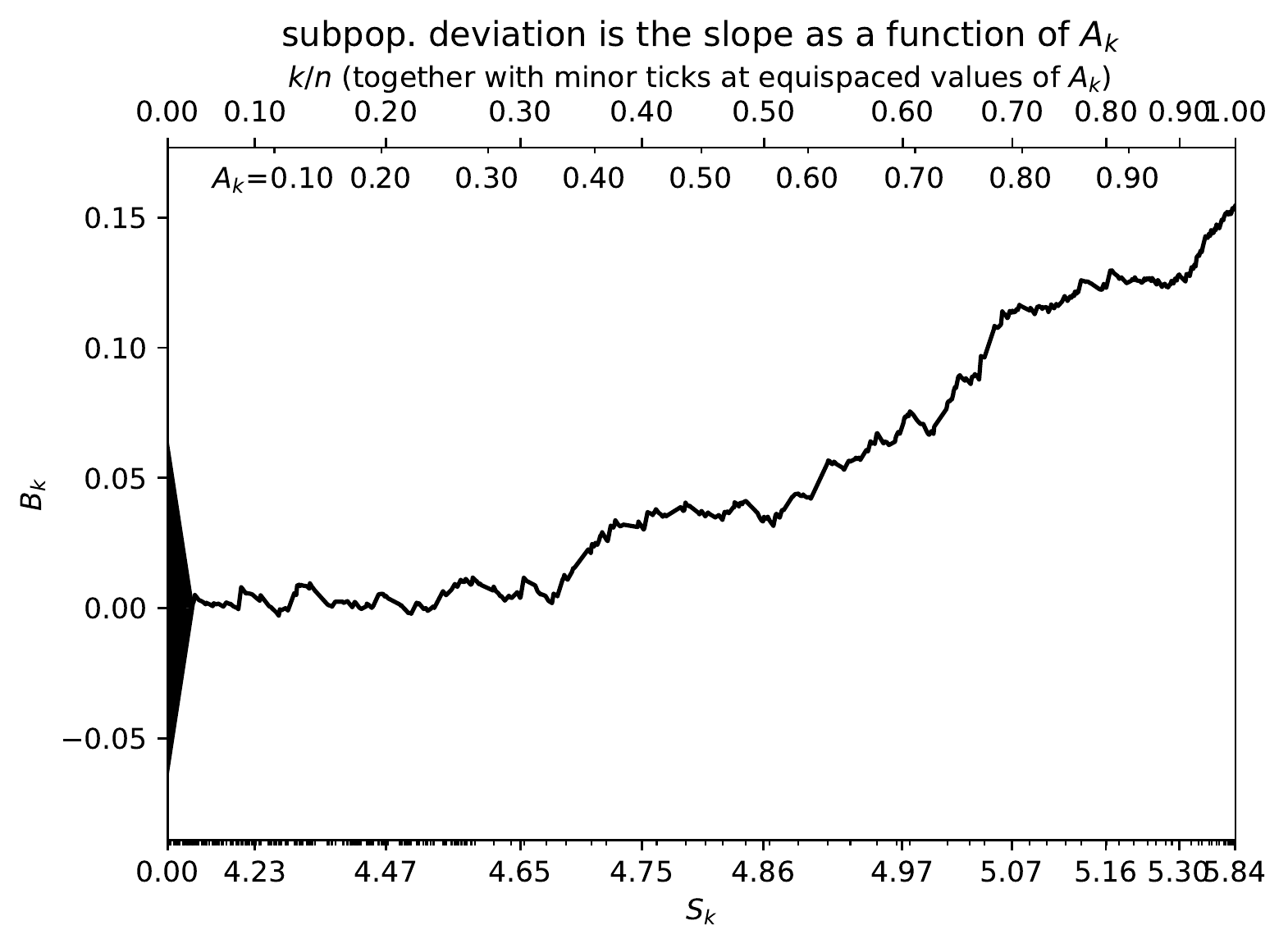}

\vspace{-.5em}

\ \ \ \ \ ($n =$ 892)
\end{center}
\vspace{-.125in}
\caption{
Difference in the number of children related to the head of household
between the county of Stanislaus and the entire state of California
(the county is the subpopulation, while the state is the full population).
The scores indicated along the lower horizontal axis are $\log_{10}$
of the adjusted household income, randomly perturbed in the upper plot
by about one part in a hundred million to guarantee their uniqueness.
There are 1,624 households representing Stanislaus.
When the scores are perturbed at random ($n =$ 1,624),
Kuiper's statistic $H = 0.1489$, while $H/\sigma = 4.373$;
Kolmogorov's and Smirnov's $G = 0.1467$, while $G/\sigma = 4.307$.
When the responses are averaged for the same score ($n =$ 892),
Kuiper's statistic $H = 0.1575$, while $H/\sigma_n = 4.710$;
Kolmogorov's and Smirnov's $G = 0.1547$, while $G/\sigma_n = 4.624$.
The estimates of P-values for Kuiper's statistic are 4.902E--5
and \hbox{0.991E--5}; the estimates of P-values for Kolmogorov's and Smirnov's
are 3.310E--5 and 0.753E--5.
These P-values reflect the observed difference of several standard deviations
beyond the expected means.
Deviation of the subpopulation from the full population is the slope.}
\label{stanislaus}
\end{figure}

\begin{figure}
\begin{center}
\includegraphics[width=\imsized]
{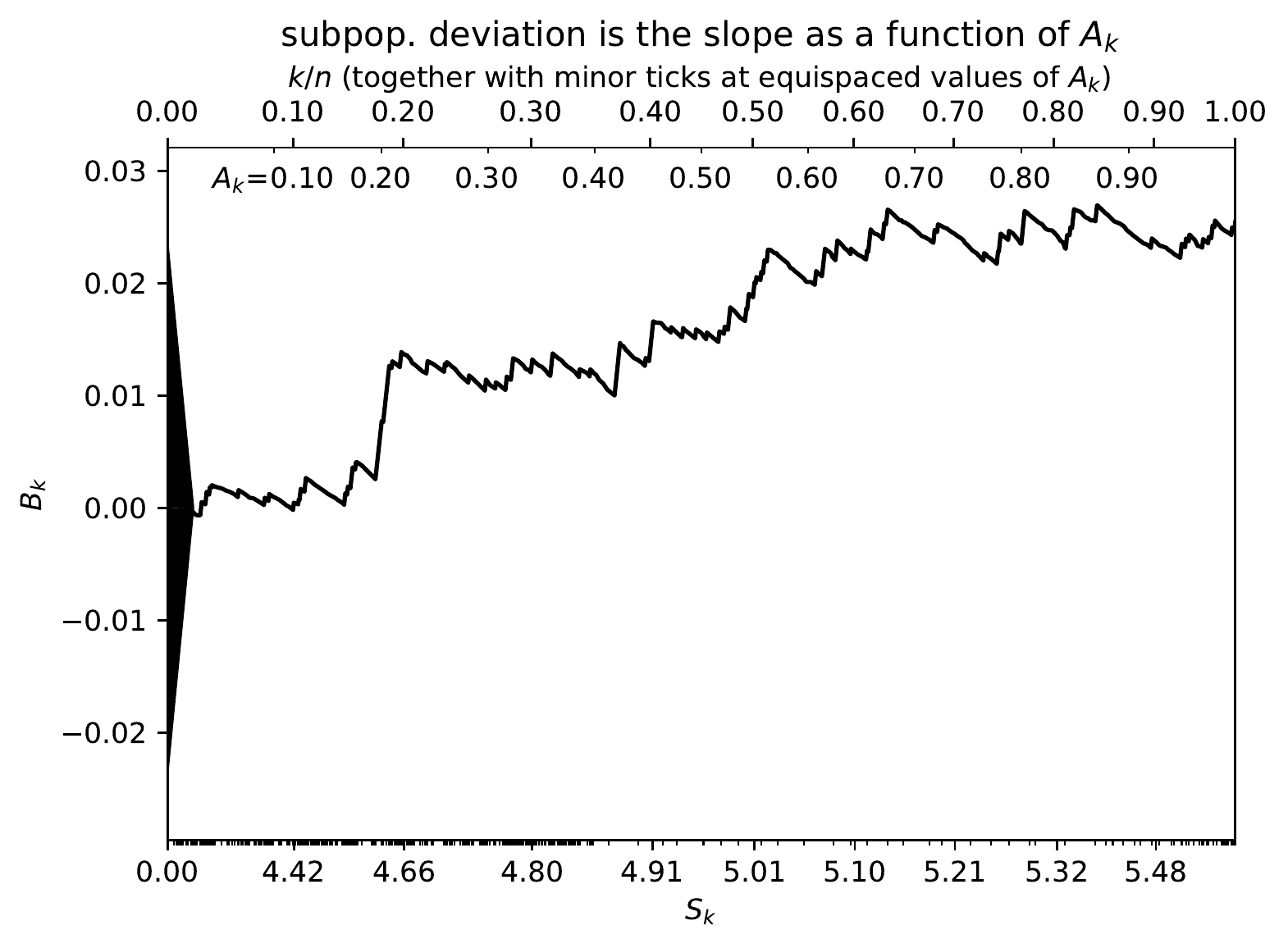}

\vspace{-.5em}

\ \ ($n =$ 679)

\

\includegraphics[width=\imsized]
{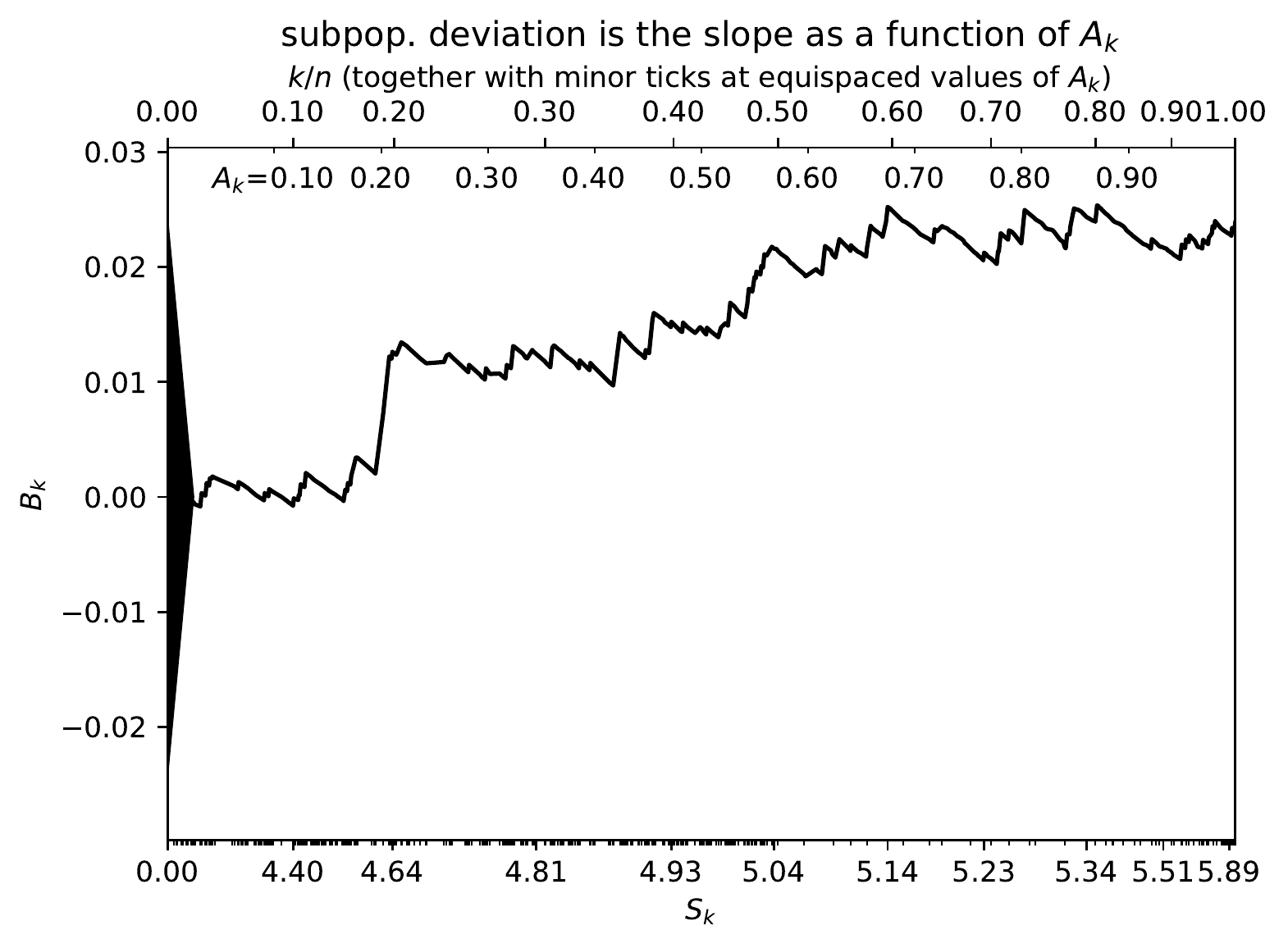}

\vspace{-.5em}

\ \ \ \ \ ($n =$ 506)
\end{center}
\vspace{-.125in}
\caption{
Difference in whether a household has internet access via satellite
between the county of Napa and the entire state of California
(the county is the subpopulation, while the state is the full population).
The scores indicated along the lower horizontal axis are $\log_{10}$
of the adjusted household income, randomly perturbed by about one part
in a hundred million to ensure their uniqueness.
There are 679 households representing Napa.
When the scores are perturbed at random ($n =$ 679),
Kuiper's statistic $H = 0.02761$, while $H/\sigma = 2.259$;
Kolmogorov's and Smirnov's $G = 0.02695$, while $G/\sigma = 2.205$.
When the responses are averaged for the same score ($n =$ 506),
Kuiper's statistic $H = 0.02619$, while $H/\sigma_n = 2.110$;
Kolmogorov's and Smirnov's $G = 0.02537$, while $G/\sigma_n = 2.043$.
The estimates of P-values for Kuiper's statistic are 0.0955 and 0.1392;
the estimates of P-values for Kolmogorov's and Smirnov's are 0.0549 and 0.0821.
The P-values reflect the observed difference
of not even a couple standard deviations beyond the expected means.
Deviation of the subpop.'s response (1 if satellite; 0 otherwise)
from the full population's is the slope.}
\label{napa}
\end{figure}

\section*{Acknowledgements}

We would like to thank Kamalika Chaudhuri, Imanol Arrieta Ibarra,
Michael Rabbat, Jonathan Tannen, and Susan Zhang.

\clearpage

\bibliographystyle{jasa3}

\bibliography{paper}

\end{document}